\newtheorem{theorem}{Theorem}
\theoremstyle{plain}
\newtheorem{lemma}{Lemma}
\newtheorem{proposition}{Proposition}
\newtheorem{remark}{Remark}
\numberwithin{equation}{section}
\newcommand{\reals}{\mathbb{R}}
\newcommand{\complexes}{\mathbb{C}}
\newcommand{\integers}{\mathbb{Z}}
\newcommand{\spec}[1]{\operatorname{spec}(#1)}
\newcommand{\proj}[1]{\mathsf{P}_{#1}}
\newcommand{\tr}[1]{\mathrm{tr} \, #1 }
\newcommand{\ptr}[2]{\mathrm{tr}_{#1} #2}
\newcommand{\id}[1]{\mathrm{id}_{#1}}
\newcommand{\comm}[2]{[#1,#2]}
\newcommand{\acomm}[2]{\{#1,#2\}}
\newcommand{\adj}{\prime}
\newcommand{\hadj}{*}
\renewcommand{\vec}[1]{\mathbf{#1}}
\newcommand{\hilbert}{\mathscr{H}}
\newcommand{\hilberte}{\hilbert_{\mathrm{e}}}
\newcommand{\matr}[1]{M_{#1}(\complexes)}
\newcommand{\matrd}{\matr{d}}
\newcommand{\hav}{\bar{H}}
\newcommand{\hame}{H_{\mathrm{e}}}
\newcommand{\hint}{H_{\mathrm{int.}}}
\newcommand{\envdens}{\rho_{\mathrm{e}}}
\begin{document}
\title[On the Lyapunov-Perron reducible...]{On the Lyapunov-Perron reducible Markovian Master Equation}
\author{Krzysztof Szczygielski}
\address[K. Szczygielski]{Institute of Theoretical Physics and Astrophysics, Faculty of Mathematics, Physics and Informatics, University of Gda\'{n}sk, Wita Stwosza 57, 80-308 Gda\'{n}sk, Poland}
\email[K. Szczygielski]{krzysztof.szczygielski@ug.edu.pl}

\begin{abstract}
We consider an open quantum system in $\matrd$ governed by quasiperiodic Hamiltonian with rationally independent frequencies and under assumption of Lyapunov-Perron reducibility of associated Schroedinger equation. We construct the Markovian Master Equation and resulting CP-divisible evolution in weak coupling limit regime, generalizing our previous results from periodic case. The analysis is conducted with application of projection operator techniques and concluded with some results regarding stability of solutions and existence of quasiperiodic global steady state.
\end{abstract}

\maketitle

\section{Introduction}

Completely positive evolution of open quantum systems governed by Hamiltonian depending on time acquires growing attention, which is partly due to increasing interest from quantum information community as well as a rapidly developing branch of quantum thermodynamics. In particular, evolution of open systems under \emph{periodic} Hamiltonian was successfully developed with the aid of \emph{Floquet theory} in \cite{Alicki2006b,Szczygielski2014} and led to some interesting results on both mathematics and physics grounds \cite{Szczygielski2013,Szczygielski2015,Gelbwaser-Klimovsky2015,Szczygielski2019,Szczygielski2020}. Recently, a case of system interacting with infinite environment by \emph{periodically modulated interaction Hamiltonian} was also considered in \cite{Szczygielski2020}. In this article, we generalize results of ref. \cite{Szczygielski2014} onto the case when the underlying Hamiltonian is not periodic, but rather \emph{quasiperiodic} (or multiperiodic), namely, when it can be expressed with a set of different, independent (non-commensurate) frequencies (see Section \ref{sec:LPreducibility} for precise definition). To ensure that this time dependence is not periodic, we impose additional assumption of \emph{rational independence} on a set of frequencies. A prototypical, physical realization of such scenario would be an atom modulated by not one, but few laser beams of different frequencies, which are mutually \emph{non-commensurate} (i.e.~are not integer multiples of the same basic frequency) and interacting weakly with surrounding electromagnetic field in thermal state.

This article is structured as follows. In Section \ref{sec:Background} we present some necessary mathematical preliminaries, including a notion of \emph{Lyapunov-Perron reducibility}. In Section \ref{sec:Model}, we draw a general model of open quantum system and characterize certain technical assumptions, under which the actual construction is performed. The main derivation of Markovian dynamics is then carried out in succeeding Section \ref{sec:CompletelyPositiveEvolution} in the formal regime of so-called \emph{weak coupling limit} and with application of projection operator technique. Our analysis is heavily inspired by results of Davies and Spohn \cite{Davies1974,Davies1976,Davies1978}, which laid the foundations for rigorous treatment of Markovianity in open systems with underlying Hamiltonian model. The main result of this section is formulated in Theorems \ref{thm:TheMainResult} and \ref{thm:LambdaCP}, where we formally construct completely positive dynamics in quasiperiodic setting, and in particular we demonstrate that under assumptions listed in Section \ref{sec:Assumptions}, the resulting Markovian Master Equation is Lyapunov-Perron reducible and a resulting quantum dynamical map admits a product structure. This corresponds with our previous results on periodic systems. Finally, in the succeeding Section \ref{sec:Stability} we formulate some results regarding asymptotic stability of solutions and existence of a global, quasiperiodic steady state.

\section{Reducible Markovian Master Equation}

\subsection{Background}
\label{sec:Background}

We use a standard notation. For normed space $(\mathscr{X},\|\cdot\|_\mathscr{X} )$, we denote by $\mathscr{B}(\mathscr{X})$ the Banach \emph{space of bounded linear maps} on $\mathscr{X}$ (with supremum norm). For Hilbert space $\hilbert$, symbol $\mathscr{B}_1 (\hilbert)$ will denote the Banach space of \emph{trace class operators} on $\hilbert$, complete with \emph{trace norm} $\| a \|_1 = \tr{\sqrt{a^\hadj a}}$ (${}^\hadj$ denotes Hermitian conjugation).

For matrix $m\in\matrd$, we will denote by $\| m \|$ the standard \emph{operator norm} of $m$, induced by $l^2$-norm $\| \cdot \|_2$ in $\complexes^d$. Duality pairing on $\matrd$ will be $(a,b) = \tr{ab}$ for $a,b\in\matrd$. Adjoint of linear map $T\in \mathscr{B}(\matrd)$ will be $T^\adj$ and subject to condition $\tr{[a\, T(b)]} = \tr{[T^\adj (a) \, b]}$ for all $a,b\in\matrd$.

Finally, we will be dealing with some Banach space-valued functions. For such a function $f : D \to \mathscr{X}$, $D\subseteq\reals$ we will define its \emph{supremum norm} by usual expression $\sup_{x\in\reals}\| f(x) \|_\mathscr{X}$. Occasionally, we will make use of other norms and if this is the case, we will define them whenever necessary.

\subsubsection{Ordinary differential equations}

We will be considering some linear ordinary differential equations (ODEs) in Banach spaces and so, we provide some necessary name conventions (see e.g.~\cite{Chicone2006}). Let $\mathscr{X}$ be a Banach space and set an non-autonomous initial value problem on $\mathscr{X}\times\reals_+$ (with $\reals_+ = [0,\infty)$),
\begin{equation}
	\dot{x}(t) = A_t (x(t)), \qquad x(0)=x_0,
\end{equation}
where $A_t \in \mathscr{B}(\mathscr{X})$, $t\in\reals_+$. Then, a differentiable operator-valued function $t\mapsto\Phi_t$, such that $x(t) = \Phi_t (x_0)$ is a solution for given initial condition $x_0 \in \mathscr{X}$, will be called the \emph{principal fundamental solution} of the problem in question. $\Phi_t$ satisfies an operator version of the ODE of a form $\dot{\Phi}_t = A_t \Phi_t$, and is always invertible, i.e.~$\Phi_{t}^{-1}$ exists for all $t\in\reals_+$. Then, a function $(t,s)\mapsto \Phi_{t,s} = \Phi_t\Phi_{s}^{-1}$, $t\geqslant s$, which translates solutions along integral curves, will be called the \emph{state transition matrix}, or \emph{propagator}, associated with solution $\Phi_t$ and will be subject to the \emph{Chapman-Kolmogorov identities} $\Phi_{t,t} = I$ and $\Phi_{t,t'}\Phi_{t',s} = \Phi_{t,s}$ for any $t' \in [s,t] \subset \reals_+$.

\subsubsection{Lyapunov-Perron reducibility}
\label{sec:LPreducibility}

Let $\mathbb{T}^r \simeq (\reals / 2\pi \integers)^r$ be the $r$--dimensional torus, $r\geqslant 1$. We set a \emph{vector of frequencies} $\vec{\Omega} = (\Omega_1, \, ... \, , \Omega_r )$, $\Omega_j > 0$, as well as a continuous torus winding $\vec{\vec{\theta}} : \reals_+ \to \mathbb{T}^r$,
\begin{equation}\label{eq:winding}
	\vec{\vec{\theta}}(t) = (\Omega_1 t, \Omega_2 t, \, ... \, , \Omega_r t).
\end{equation}
We will say that a matrix-valued function $t\mapsto A_t = [a_{ij}(t)] \in\matr{d}$, $t\in\reals_+$ is \emph{quasiperiodic} (or multiperiodic), if it can be expressed as a composition
\begin{equation}\label{eq:Acomposition}
	A_t = (\hat{A}\circ \vec{\theta})(t) = \hat{A}(\Omega_1 t, \Omega_2 t, \, ... , \, \Omega_r t),
\end{equation}
for some function $\hat{A} : \mathbb{T}^r \to \matr{d}$, $\hat{A} = [\hat{a}_{ij}]$, which satisfies periodicity condition $\hat{A}(\vec{\theta}) = \hat{A}(\vec{\theta} + 2\pi\vec{k})$ for any $\vec{\theta}\in\mathbb{T}^r$, $\vec{k}\in\integers^r$, i.e. is $2\pi$-periodic in each variable. With any such function $\hat{A}$, we associate its multidimensional \emph{Fourier series} 
\begin{equation}
	\hat{A}(\vec{\theta}) \sim \sum_{\vec{n}\in\integers^r} \hat{A}_{\vec{n}} e^{i\vec{n}\cdot\vec{\vec{\theta}}}, \quad \hat{A}_{\vec{n}} = \int\limits_{\mathbb{T}^r} \hat{A}(\vec{\theta}) e^{-i\vec{n}\cdot\vec{\theta}} dV,
\end{equation}
with $dV$ being a normalized Lebesgue measure restricted to cube $[0,2\pi]^r$ and $\vec{n}\cdot\vec{\theta}=n_1 \theta_1 + ... + n_r \theta_r$ is a usual dot product in $\reals^r$. Later on, we will conveniently limit ourselves only to functions admitting a Fourier series which converges uniformly (and thus pointwise everywhere) on $\mathbb{T}^r$ with respect to (any) matrix norm in $\matrd$.
\vskip\baselineskip
\noindent Now, consider an initial value problem on $\complexes^d \times \reals_+$,
\begin{equation}\label{eq:ODE}
	\dot{\vec{x}}(t) = A_t \vec{x}(t), \quad \vec{x}(0) = \vec{x}_0 ,
\end{equation}
where $t\mapsto A_t \in \matrd$ is quasiperiodic. We say that ODE given in \eqref{eq:ODE} is \emph{Lyapunov-Perron reducible} (reducible, for short) if and only if there exists a continuous, linear invertible change of variables $\vec{x} = P_t \vec{y}$, where $t \mapsto P_t$  is also quasiperiodic, such that \eqref{eq:ODE} reduces to a system
\begin{equation}\label{eq:ODE2}
	\dot{\vec{y}}(t) = X\vec{y}(t), \quad \vec{y}(0) = \vec{x}_0
\end{equation}
with constant matrix coefficient $X$. Then, solving \eqref{eq:ODE2} and applying inverse transformation, one obtains a principal fundamental solution of \eqref{eq:ODE} in factorized form
\begin{equation}\label{eq:solutionFloquetForm}
	\Phi_t = P_t e^{tX}.
\end{equation}
If $r=1$, i.e.~$A_t$ is periodic, \eqref{eq:ODE} is always reducible by virtue of Floquet theory \cite{Chicone2006} and \eqref{eq:solutionFloquetForm} is known as \emph{Floquet normal form} of a solution. In general case ($r > 1$) however, reducibility is not guaranteed and still remains an area of active research. One of most elegant results in this domain is due to Johnson and Sell \cite{Johnson1981} and correlates reducibility with smoothness of $\hat{A}$, non-resonance properties of $\vec{\Omega}$ and so-called \emph{Sacker-Sell spectrum} of associated irrational twist flow $(t,\phi)\mapsto\phi + \vec{\Omega}t$ on $\mathbb{T}^r$. In this work however, we do not explore conditions for reducibility of Markovian Master Equations as this is a problem reaching far beyond the scope of our analysis, but rather we construct appropriate completely positive quantum dynamics, assuming that the underlying Hamiltonian produces \emph{a priori} reducible Schroedinger equation.

\subsection{A model of open system}
\label{sec:Model}

We will be considering a finite-dimensional open quantum system S, described by Hilbert space $(\complexes^d ,\| \cdot \|_2 )$, which weakly interacts with infinite-dimensional environment E, of Hilbert space $\hilberte$, via a bounded interaction Hamiltonian
\begin{equation}\label{eq:Hint}
	\hint = \sum_{\mu} S_\mu \otimes R_\mu ,
\end{equation}
where $S_\mu \in \matrd$, $R_\mu \in \mathscr{B}(\hilberte)$; summation index $\mu$ runs over a finite set. System S is governed by quasiperiodic Hamiltonian $H_t$, while the Hamiltonian $\hame$ of E is constant. State of compound system $S+E$ is then expressed as time-dependent joint \emph{density operator} $\sigma_t$, $t\in\reals_+$, i.e.~a positive semi-definite, trace class operator of trace 1, contained in Banach space $\mathcal{B} = \mathscr{B}_1 (\complexes^d\otimes\hilberte)$. This state undergoes a \emph{reversible evolution} generated by a joint, time-dependent bounded Hamiltonian $H_{\text{se}}(t)$, characterized by
\begin{equation}
	H_{\text{se}}(t) = H_{\mathrm{f.}}(t) + \lambda H_{\text{int.}}, \quad H_{\mathrm{f.}}(t) = H_t\otimes I + I \otimes \hame ,
\end{equation}
where $H_{\mathrm{f.}}(t)$ may be understood as a ``free'', quasiperiodic Hamiltonian of S+E and $\lambda > 0$ is a small (compared to relevant energy scales) dimensionless parameter. For convenience, we restrict our analysis only to the case $t\mapsto H_{\mathrm{f.}}(t)$ being of class $\mathscr{C}^{\infty}(\reals_+)$, i.e.~with all matrix elements being smooth functions. This will allow for more freedom in representing certain functions in our model as pointwise convergent Fourier series (we hope that this restriction can be weakened without affecting validity of our results).
\vskip\baselineskip

\subsubsection{Assumptions of the construction}
\label{sec:Assumptions}

For technical reasons, in all our computations we will be utilizing the following three simplifying assumptions:

\begin{enumerate}
	\item \label{assm:RatInd} \emph{Rational independence of frequency vector $\vec{\Omega}$}. We assume that the frequencies set $\{\Omega_1 , \, ... \, , \, \Omega_r\}$ is \emph{rationally independent}, which means that the only vector $\vec{k}\in\integers^r$ satisfying equation $\vec{k}\cdot\vec{\Omega}=0$, is $\vec{k} = 0$. 
	\item \label{assm:Reducible} \emph{Reducibility of Shroedinger equation}. This is the main assumption of our analysis. Namely, we assume that the Schroedinger equation governed by $H_t$,
\begin{equation}
	\dot{\psi}(t) = -i H_t\psi(t), \quad \psi(t)\in\complexes^d,
\end{equation}
is reducible. In consequence, the associated unitary evolution operator $u_t$, subject to equation $\dot{u}_t = -i H_t u_t$, admits a product structure
\begin{equation}
	u_t = p_{t} e^{-i\hav t},
\end{equation}
for quasiperiodic unitary operator $p_{t}$ such that $p_0 = I$, and a Hermitian matrix $\hav$. Following earlier works, we will be calling $\hav$ the \emph{averaged Hamiltonian}. Eigenvalues $\epsilon_i$ of $\hav$ will be called \emph{quasienergies}, and spectrum of associated derivation $\comm{\hav}{\cdot\,}$, i.e.
	\begin{equation}
		\spec{\comm{\bar{H}}{\cdot\,}} = \{\omega = \epsilon - \epsilon' : \epsilon, \epsilon' \in \spec{\hav}\}
	\end{equation}
will be called the \emph{set of Bohr quasi-frequencies}.
	\item \label{assm:OmegaCF} \emph{$\vec{\Omega}$-congruence freedom}. Lastly, we assume that no pair $(\omega_1,\omega_2)$ of eigenvalues of derivation $\comm{\hav}{\cdot\,}$ exists such that $\omega_1 - \omega_2 = \vec{k}\cdot\vec{\Omega}$ for any vector $\vec{k}\in\integers^r \setminus \{\vec{0}\}$, i.e. the only permitted case when $\omega_1 - \omega_2 = \vec{k}\cdot\vec{\Omega}$ is such that $\vec{k}=\vec{0}$ and therefore $\omega_1 = \omega_2$.
\end{enumerate}

\begin{remark}
Prior to proceeding to the actual construction, we will make some comments on significance of the assumptions listed above.
\begin{enumerate}
	\item The rational independence of frequencies (assumption \ref{assm:RatInd}) assures that the winding $\theta$ defined in \eqref{eq:winding} is \emph{irrational} and the associated orbit $\{\vec{\theta} : t\in\reals_+\}$ is dense on $\mathbb{T}^r$. This in turn yields that the Hamiltonian $H_t$ is necessarily \emph{non-periodic}, i.e.~there exists no $T > 0$ such that $H_{t+T} = H_t$. This assures that the scenario described in this article is completely distinct from our previous results on periodic systems \cite{Alicki2006b,Szczygielski2013,Szczygielski2020,Szczygielski2014} (however the general form of solutions remains similar in structure).
	\item Reducibility of underlying Schroedinger equation (assumption \ref{assm:Reducible}) is crucial for validity of some key results, such as reducibility of Markovian Master Equation (stated in Theorem \ref{thm:LambdaCP}) and internal structure of certain semigroup generators under weak coupling limit regime. This assumption is however not necessary in the sense that even when abandoned, one is still able to develop some results on general form of the solution, since e.g.~weak coupling limit still remains well-defined, and the quasiperiodic case is then no different than a generic case of system under time-dependent Hamiltonian, as given e.g.~in \cite{Davies1978}. With this assumption however, we are able to give some deeper insight into actual structure of solutions. We also note that in presence of rational independence of frequencies, existence of product form of $u_t$ is mathematically quite a nontrivial demand, far from being granted, in contrast to a simply periodic case (or commensurate case, as is commonly assumed by some authors). 
	\item The $\vec{\Omega}$-congruence freedom assumption is stated purely for convenience, as it allows for significant simplification of certain series-like expressions. We note here, that in principle one should be able to obtain results similar to ours with this condition lifted.
\end{enumerate}

\end{remark}

\subsubsection{Restriction of dynamics to subspaces}

In the subsequent considerations, we will largely adapt techniques applied by e.g. Davies and Spohn \cite{Davies1974,Davies1978} in order to present a formal construction of Markovian Master Equations under weak coupling limit regime and with quasiperiodicity of underlying Hamiltonian. Specifically, we define a \emph{projection operator} $\proj{0}$ on full Banach space $\mathcal{B}$ via the partial trace operation,
\begin{equation}
	\proj{0}(a) = (\ptr{\hilberte}{a}) \otimes \envdens, \quad a\in\mathcal{B},
\end{equation}
for $\envdens$ being a constant state of the environment. Putting then $\proj{1}=\id{}-\proj{0}$, we split $\mathcal{B}$ into two subspaces $\mathcal{B}_{0}$ and $\mathcal{B}_{1}$,
\begin{equation}
	\mathcal{B} = \mathcal{B}_{0}\oplus\mathcal{B}_{1}, \quad \mathcal{B}_{0}=\proj{0}\mathcal{B}=\matrd \otimes \envdens, \quad \mathcal{B}_{1}=\proj{1}\mathcal{B}
\end{equation}
which describe reduced state of S and E, respectively. We introduce two following derivations on $\mathcal{B}$,
\begin{equation}\label{eq:Derivations}
	Z_t = -i \comm{H_{\mathrm{f.}}(t)}{\cdot\,}, \quad A = -i \comm{\hint}{\cdot\,},
\end{equation}
and define $A_{ij} = \proj{i}A\proj{j}$ for $i,j\in \{0,1\}$. Immediately, we see that $\mathcal{B}_0$ is an \emph{invariant subspace} of $Z_t$, since, by $\comm{\hame}{\envdens}=0$,
\begin{equation}
	Z_t (\rho\otimes\envdens) = -i \comm{H_t \otimes I + I \otimes \hame}{\rho\otimes\envdens} = -i \comm{H_t}{\rho} \otimes \envdens ,
\end{equation}
and so $Z_t$ is an endomorphism on $\mathcal{B}_0$. For convenience, we choose operators $R_\mu$ in interaction Hamiltonian \eqref{eq:Hint} to be of \emph{vanishing expectation values}, i.e.~$\tr{\envdens R_\mu} = 0$. This leads to condition $A_{00}=0$ and allows to rewrite derivation $A$ as a sum
\begin{equation}
	A=A_{11}+\Delta
\end{equation}
for $\Delta = A_{01}+A_{10}$. By general considerations \cite{Breuer2002,Alicki2006a,Rivas2012}, density operator $\sigma_t$ is then subject to the \emph{von Neumann equation}
\begin{equation}\label{eq:VonNeumann}
	\dot{\sigma}_t = (Z_t + \lambda A)(\sigma_t) = (Z_t + \lambda A_{11} + \lambda\Delta)(\sigma_t).
\end{equation}
\begin{proposition}\label{prop:Ut0Reducible}
Let the reducibility assumption \ref{assm:Reducible} apply and let the principal solution of Schroedinger equation be $u_t = p_{t} e^{-it\hav}$ with quasiperiodic unitary $p_{t}$ and Hermitian $\hav$. Then, equation
\begin{equation}\label{eq:ODEZt}
	\dot{x}_t = Z_t (x_t), \quad x_t \in \mathcal{B}
\end{equation}
is reducible and its principal fundamental solution $U_{t}$ admits a product form
\begin{equation}\label{eq:U0t}
	U_{t} = P_{t} e^{t\bar{Z}},
\end{equation}
where $P_{t},\bar{Z}$ are endomorphisms on $\mathcal{B}_0$ and function $t\mapsto P_{t}$ is quasiperiodic. Their action on $a\in\mathcal{B}$ is given by
\begin{align}\label{eq:PTetZdefinitions}
	P_{t}(a) = \left(p_{t}\otimes I\right) a \left(p_{t}^\hadj \otimes I\right), \quad \bar{Z}(a) = -i \comm{\hav\otimes I + I\otimes\hame}{a},
\end{align}
and, consequently,
\begin{equation}\label{eq:ExptZ}
	e^{t\bar{Z}} = e^{-it\comm{\hav}{\cdot\,}}\otimes e^{-it\comm{\hame}{\cdot\,}}.
\end{equation}
\end{proposition}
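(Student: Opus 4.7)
The plan is to construct $U_{t}$ explicitly from the unitary evolution on the compound system $\complexes^d \otimes \hilberte$, and then read off the factorization. Since $H_{\mathrm{f.}}(t) = H_t \otimes I + I \otimes \hame$ is a sum of two operators acting on distinct tensor factors (hence commuting at each $t$), the Schr\"odinger-type initial value problem $\dot{v}_t = -i H_{\mathrm{f.}}(t) v_t$, $v_0 = I$, is solved factorwise by $v_t = u_t \otimes e^{-it\hame}$. The reducibility assumption \ref{assm:Reducible} then gives
\begin{equation}
v_t = (p_{t} \otimes I)\bigl(e^{-it\hav} \otimes e^{-it\hame}\bigr).
\end{equation}

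Next, I would note that $U_t$ is nothing but conjugation by $v_t$, that is $U_t(a) = v_t a v_{t}^\hadj$, as this clearly solves \eqref{eq:ODEZt} with $U_0 = \id{\mathcal{B}}$. Setting $w_1(t) = p_t \otimes I$ and $w_2(t) = e^{-it\hav} \otimes e^{-it\hame}$, the elementary identity $v_t a v_{t}^\hadj = w_1(t)\bigl(w_2(t) a w_2(t)^\hadj\bigr) w_1(t)^\hadj$ decomposes $U_t$ as a composition $U_t = P_t \circ V_t$, where $P_t$ has exactly the form stated in \eqref{eq:PTetZdefinitions} and $V_t(a) = w_2(t) a w_2(t)^\hadj$. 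Because $w_2(t)$ is a one-parameter unitary group generated by the bounded self-adjoint operator $\hav \otimes I + I \otimes \hame$, the induced conjugation $V_t$ is the one-parameter group generated by the inner derivation $\bar{Z}(a) = -i\comm{\hav \otimes I + I \otimes \hame}{a}$, so $V_t = e^{t\bar{Z}}$. Splitting $\bar{Z}$ as a sum of the two commuting derivations $-i\comm{\hav \otimes I}{\cdot\,}$ and $-i\comm{I \otimes \hame}{\cdot\,}$ then yields the tensor factorization \eqref{eq:ExptZ}.

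Finally, I would verify quasiperiodicity of $t \mapsto P_t$ by pulling it back from $p_t$: reducibility provides a continuous $\hat{p} : \mathbb{T}^r \to \matrd$ with $p_t = \hat{p}(\vec{\theta}(t))$, and therefore $P_t = \hat{P}(\vec{\theta}(t))$ where $\hat{P}(\vec{\theta})(a) = (\hat{p}(\vec{\theta}) \otimes I)\, a\, (\hat{p}(\vec{\theta})^\hadj \otimes I)$ is $2\pi$-periodic in each coordinate. Invariance of $\mathcal{B}_0$ under $P_t$ and $\bar{Z}$ is immediate from $P_t(\rho \otimes \envdens) = (p_t \rho p_{t}^\hadj) \otimes \envdens$ and, using $\comm{\hame}{\envdens}=0$, from $\bar{Z}(\rho \otimes \envdens) = -i\comm{\hav}{\rho} \otimes \envdens$. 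I do not expect a substantive obstacle: the only point requiring real care is keeping the order of composition straight when conjugation by a product is split, so that $P_t$ ends up on the \emph{left} of $e^{t\bar{Z}}$ in \eqref{eq:U0t}.
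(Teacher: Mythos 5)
Your proposal is correct; every step checks out, including the order of the factors (conjugation by a product $w_1 w_2$ splits as conjugation by $w_1$ composed \emph{after} conjugation by $w_2$, which is exactly why $P_t$ lands on the left of $e^{t\bar{Z}}$). The route is, however, noticeably different in mechanics from the paper's. The paper takes \eqref{eq:U0t} as an ansatz and verifies it by differentiation: it first derives the auxiliary equation $\dot{p}_t = -iH_t p_t + i p_t\hav$, uses it to compute $\dot{P}_t = -i\comm{H_t\otimes I}{\cdot\,}P_t + iP_t\comm{\hav\otimes I}{\cdot\,}$, and then checks that $\dot{U}_t = Z_t U_t$; the superoperator $U_t$ is never written as a conjugation. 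You instead construct $U_t$ explicitly as $a\mapsto v_t a v_t^\hadj$ with $v_t = u_t\otimes e^{-it\hame}$ and obtain the factorization $U_t = P_t e^{t\bar{Z}}$ purely algebraically from $v_t = (p_t\otimes I)\bigl(e^{-it\hav}\otimes e^{-it\hame}\bigr)$, differentiating only once to identify $v_t$. What your approach buys is transparency: the product form is \emph{derived} rather than guessed, the identification $V_t = e^{t\bar{Z}}$ and the tensor splitting \eqref{eq:ExptZ} become immediate consequences of one-parameter group theory for the inner derivation, and the "some algebra" hidden in the paper's computation of $\dot{P}_t$ is avoided. What the paper's verification buys is brevity and the explicit intermediate identity for $\dot{P}_t$, which it reuses later (e.g.\ in the proof of Theorem \ref{thm:LambdaCP} for $\dot{\Sigma}_t$). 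Both proofs establish invariance of $\mathcal{B}_0$ the same way, via $\comm{\hame}{\envdens}=0$.
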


\begin{proof}
Showing all above claims demands only for differentiating proposed solution $U_{t}$ and confirming \eqref{eq:ODEZt}. With the aid of Schroedinger equation, one differentiates $u_t$ in order to find an equation obeyed by $p_{t}$ ,
\begin{equation}\label{eq:ptDerivative}
	\dot{p}_t = -i H_t p_{t}+ip_{t}\hav,
\end{equation}
which in turn allows to find, after some algebra, expression for $\dot{P}_{t}$,
\begin{equation}
	\dot{P}_{t} = -i \comm{H_t\otimes I}{\cdot\,}P_{t} + i P_{t}\comm{\hav\otimes I}{\cdot\,}.
\end{equation}
This can be then used to check, that
\begin{align}
	\dot{U}_t = \dot{P}_{t} e^{t\bar{Z}} + P_{t}\bar{Z}e^{t\bar{Z}} &= -i \comm{H_t\otimes I + I \otimes \hame}{\cdot\,} P_{t} e^{t\bar{Z}}\\
	&= -i \comm{H_{\mathrm{f.}}(t)}{\cdot\,} U_t,\nonumber
\end{align}
i.e.~equation \eqref{eq:ODEZt} is indeed satisfied with proposed $U_{t}$, as desired. Clearly, both maps $\bar{Z}$, $P_t$ are endomorphisms when restricted to $\mathcal{B}_0$. Validity of \eqref{eq:ExptZ} is then a simple consequence of form of $\bar{Z}$.
\end{proof}

\noindent The following two lemmas will be of particular importance later on (see Appendix \ref{app:Supplement} for proofs):

\begin{lemma}\label{lemma:PtConvergence}
If $t\mapsto H_{\mathrm{f.}}(t)$ is of class $\mathscr{C}^{\infty}(\reals_+)$, then function $t\mapsto P_t$ admits a Fourier series
\begin{equation}
	P_t = \sum_{\vec{n}\in\integers^r} \hat{P}_{\vec{n}} e^{i\vec{n}\cdot\vec{\Omega}t},
\end{equation}
converging uniformly (and thus pointwise) on $\reals_+$ with respect to operator norm in $\mathscr{B}(\matrd)$. 
\end{lemma}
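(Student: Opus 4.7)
The plan is to reduce the claim to the classical fact that a $\mathscr{C}^\infty$ function on the torus $\mathbb{T}^r$, valued in a finite-dimensional Banach space, has a Fourier series that converges absolutely and uniformly. Concretely, I would establish that the hull $\hat{P}\colon\mathbb{T}^r\to\mathscr{B}(\matrd)$ satisfying $P_t = \hat{P}(\vec{\Omega}t)$ is smooth, and then deduce rapid decay of its Fourier coefficients.

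First, I would argue that $t\mapsto p_t$, and then the associated hull $\hat{p}\colon\mathbb{T}^r\to U(d)$, are smooth. Since $t\mapsto H_t$ is of class $\mathscr{C}^\infty(\reals_+)$ by hypothesis and $u_t$ solves the smooth linear ODE $\dot{u}_t = -iH_t u_t$ with $u_0 = I$, standard ODE theory gives $u_t \in \mathscr{C}^\infty(\reals_+, U(d))$, whence $p_t = u_t e^{it\hav}$ is smooth in $t$. By reducibility, $p_t = \hat{p}(\vec{\Omega}t)$ for some $\hat{p}\colon\mathbb{T}^r \to U(d)$, uniquely determined on the dense orbit $\{\vec{\Omega}t : t \geqslant 0\}$ by rational independence of $\vec{\Omega}$. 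Smoothness of $\hat{p}$ on the whole torus must then be propagated from smoothness along the orbit, using the flow equation $(\vec{\Omega}\cdot\nabla)\hat{p} = -i\hat{H}\hat{p} + i\hat{p}\hav$ satisfied by $\hat{p}$ together with smoothness of $\hat{H}$. Once $\hat{p}\in\mathscr{C}^\infty(\mathbb{T}^r, U(d))$ is established, the map $\hat{P}(\vec{\theta})(a) = (\hat{p}(\vec{\theta})\otimes I)\, a\,(\hat{p}(\vec{\theta})^\hadj\otimes I)$ is a polynomial expression in $\hat{p}$ and $\hat{p}^\hadj$, so $\hat{P} \in \mathscr{C}^\infty(\mathbb{T}^r, \mathscr{B}(\matrd))$.

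Second, I would invoke the standard Fourier-analytic estimate on the torus. Integration by parts in each angular variable yields, for any multi-index $\alpha$ and any $\vec{n}\in\integers^r$ with $\vec{n}^\alpha \neq 0$,
\begin{equation*}
\|\hat{P}_{\vec{n}}\| = \left\| \frac{1}{(i\vec{n})^\alpha} \int_{\mathbb{T}^r} (\partial^\alpha \hat{P})(\vec{\theta})\, e^{-i\vec{n}\cdot\vec{\theta}}\, dV \right\| \leqslant \frac{\|\partial^\alpha \hat{P}\|_\infty}{|\vec{n}^\alpha|},
\end{equation*}
so $\|\hat{P}_{\vec{n}}\|$ decays faster than any polynomial in $|\vec{n}|$. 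Choosing decay of order greater than $r$ and comparing with $\sum_{\vec{n}\in\integers^r} (1+|\vec{n}|)^{-s}$, $s > r$, gives absolute summability $\sum_{\vec{n}} \|\hat{P}_{\vec{n}}\| < \infty$, and the Weierstrass M-test then yields uniform convergence of $\sum_{\vec{n}} \hat{P}_{\vec{n}}\, e^{i\vec{n}\cdot\vec{\Omega}t}$ to $P_t$ on $\reals_+$ in the operator norm of $\mathscr{B}(\matrd)$.

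The main obstacle is the smoothness of $\hat{p}$ on the \emph{entire} torus rather than just along the dense orbit. Differentiating in $t$ provides directional derivatives only along $\vec{\Omega}$, and upgrading this to smoothness in all $r$ coordinate directions is delicate in general, being intertwined with Diophantine properties of $\vec{\Omega}$ and hypoellipticity of the flow on $\mathbb{T}^r$. It is plausible that the author circumvents this either by encoding smoothness of $\hat{p}$ directly into the reducibility framework, or by leveraging the explicit formula for $P_t$ in terms of $u_t$ and $e^{it\hav}$ together with the a priori quasiperiodicity of $P_t$ to produce the Fourier coefficients and their decay more directly. Once smoothness of $\hat{P}$ on $\mathbb{T}^r$ is secured, the remainder of the argument is entirely standard.
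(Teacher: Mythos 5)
Your proposal is correct in outline and shares the paper's skeleton---pass to the hull on $\mathbb{T}^r$, establish its regularity, and deduce uniform convergence of the Fourier series---but the second half is executed differently. The paper never invokes rapid decay of the coefficients of $\hat{P}$: it first obtains uniform convergence of the entrywise Fourier series of $\hat{p}$ from $\hat{p}_{jk}\in\mathscr{C}^{\infty}(\mathbb{T}^r)$ (citing Grafakos, with circular partial sums), upgrades this to uniform convergence in matrix norm via the inequality $\| m \| \leqslant \| m \|_{\mathrm{F}}$, and then shows that the conjugation $\hat{P}_{\vec{\theta}}(a)=(\hat{p}(\vec{\theta})\otimes I)\,a\,(\hat{p}(\vec{\theta})^{\hadj}\otimes I)$ inherits a uniformly convergent double Fourier series by an add-and-subtract estimate on products of partial sums together with a Moore--Smith argument. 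Your route---note that $\hat{P}$ is a polynomial expression in $\hat{p}$ and $\hat{p}^{\hadj}$, hence smooth, integrate by parts to get super-polynomial decay of $\|\hat{P}_{\vec{n}}\|$, compare with $\sum_{\vec{n}}(1+\|\vec{n}\|_2)^{-s}$ for $s>r$, and finish with the Weierstrass M-test---is shorter and delivers the strictly stronger conclusion of absolute convergence; your polynomial-smoothness observation is exactly what replaces the paper's product-of-series manipulation.

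On the obstacle you flag: you are right that differentiating $p_t$ in $t$ controls only the derivative of $\hat{p}$ along the single direction $\vec{\Omega}$, and that smoothness of the hull on all of $\mathbb{T}^r$ is the genuinely delicate point. Be aware, however, that the paper does not resolve it either: its proof passes from smoothness of $t\mapsto p_{jk}(t)$ directly to the assertion $\hat{p}_{jk}\in\mathscr{C}^{\infty}(\mathbb{T}^r)$ with no argument, so regularity of the hull is in effect an implicit strengthening of the reducibility assumption (consistent with the quasiperiodic-reducibility literature, e.g.~Johnson--Sell, where the reducing transformation is constructed together with a smooth hull). Your identification of this as the load-bearing, unproven step is accurate; it is not a defect of your argument relative to the paper's.
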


\begin{lemma}\label{lemma:PexptZdecomposition}
Let $\{\omega\} = \spec{\comm{\bar{H}}{\cdot}}$, as before. Then, there exists a family of maps $\{Q_\omega\}\subset \mathscr{B}(\matrd)$, such that the following spectral decomposition applies,
\begin{equation}\label{eq:PexptZdecomposition}
	\proj{0}e^{t\bar{Z}} = \sum_{\omega} Q_\omega e^{-it\omega}, \quad t\in\reals_+ .
\end{equation}
\end{lemma}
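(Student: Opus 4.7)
The plan is to combine two ingredients: the factored form of $e^{t\bar Z}$ from Proposition~\ref{prop:Ut0Reducible} and the (finite) spectral decomposition of the derivation $\comm{\hav}{\cdot\,}$ on $\matrd$. By \eqref{eq:ExptZ} we already have
\begin{equation*}
	e^{t\bar Z} = e^{-it\comm{\hav}{\cdot\,}} \otimes e^{-it\comm{\hame}{\cdot\,}},
\end{equation*}
so all the work lies in understanding how $\proj{0}$ interacts with the two tensor factors.

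First I would show that $\proj{0}$ completely annihilates the environmental time-evolution. Since $e^{-it\comm{\hame}{\cdot\,}}$ is trace-preserving on $\mathscr{B}_1 (\hilberte)$ (because the commutator $\comm{\hame}{\cdot\,}$ has vanishing trace by the cyclic property), one has
\begin{equation*}
	\ptr{\hilberte}{(\id{\matrd}\otimes e^{-it\comm{\hame}{\cdot\,}})(a)} = \ptr{\hilberte}{a}, \quad a\in\mathcal{B},
\end{equation*}
which then combines with the commutativity of $\ptr{\hilberte}{\cdot\,}$ and $\comm{\hav \otimes I}{\cdot\,}$ to give
\begin{equation*}
	\proj{0}\, e^{t\bar Z}(a) = \bigl[e^{-it\comm{\hav}{\cdot\,}}\ptr{\hilberte}{a}\bigr]\otimes\envdens .
\end{equation*}
Next, write the finite spectral decomposition $\hav = \sum_\epsilon \epsilon\, \proj{\epsilon}$ in terms of spectral projectors of $\hav$, and define, for each $\omega \in \spec{\comm{\hav}{\cdot\,}}$,
\begin{equation*}
	Q_\omega(x) = \sum_{\epsilon - \epsilon' = \omega} \proj{\epsilon}\, x\, \proj{\epsilon'}, \quad x\in\matrd.
\end{equation*}
These maps are pairwise orthogonal idempotents in $\mathscr{B}(\matrd)$ summing to the identity, and a direct check yields $e^{-it\comm{\hav}{\cdot\,}} = \sum_\omega Q_\omega e^{-it\omega}$. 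Substituting this into the expression for $\proj{0}\, e^{t\bar Z}$, with the standard identification of $Q_\omega$ with the operator $a\mapsto Q_\omega(\ptr{\hilberte}{a})\otimes\envdens$ (the sense in which \eqref{eq:PexptZdecomposition} is to be read), gives the claimed decomposition.

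There is no genuine obstacle here: the sum over $\omega$ is finite because $\dim\matrd < \infty$, so no convergence question arises, and the identifications between endomorphisms on $\matrd$ and on $\mathcal{B}_0 \simeq \matrd\otimes\envdens$ are routine. The only point that requires a little care is verifying trace-preservation of $e^{-it\comm{\hame}{\cdot\,}}$ on $\mathscr{B}_1(\hilberte)$, which is what lets the environmental exponential drop out cleanly under the partial trace.
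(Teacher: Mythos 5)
Your proof is correct and takes essentially the same route as the paper's: both build the maps $Q_\omega$ out of the spectral projections of $\hav$ via $x\mapsto\sum_{\epsilon-\epsilon'=\omega}P_\epsilon\, x\, P_{\epsilon'}$ and exploit the tensor factorization \eqref{eq:ExptZ}. The only (minor) difference is how the environmental factor is removed — you use trace-preservation of $e^{-it\comm{\hame}{\cdot\,}}$ under the partial trace, which proves the literal statement for $\proj{0}e^{t\bar{Z}}$ on all of $\mathcal{B}$, whereas the paper inserts a second $\proj{0}$ on the right and uses $\comm{\hame}{\envdens}=0$; both are fine.
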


The following general observation will provide a useful way of expressing solutions of perturbed differential equations in terms of unperturbed ones. We present this statement without proof (see a book by Kato \cite{Kato1966} or ref.~\cite{Szczygielski2020} for more details):

\begin{proposition}
Principal fundamental solution $U_{t}^{\lambda}$ of ODE
\begin{equation}\label{eq:ODEZtA}
	\dot{x}_t = (Z_t + \lambda A_{11})(x_t), \quad x_t \in\mathcal{B}
\end{equation}
can be expressed by a following integral formula
\begin{equation}\label{eq:Ulambda}
	U_{t}^{\lambda} = U_{t} + \lambda\int\limits_{0}^{t} U_{t,t'}A_{11}U_{t'}^{\lambda}dt'
\end{equation}
for $U_t$ being a solution of \eqref{eq:ODEZt}. Similarly, solution $V_{t}^{\lambda}$ of von Neumann equation \eqref{eq:VonNeumann} may be expressed in analogous fashion as
\begin{equation}\label{eq:Vdef}
	V_{t}^{\lambda} = U_{t}^{\lambda} + \lambda\int\limits_{0}^{t} U_{t,t'}^{\lambda} \Delta V_{t'}^{\lambda} dt'.
\end{equation}
\end{proposition}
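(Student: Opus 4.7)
The plan is to use the standard Duhamel / variation-of-parameters argument, relying only on the invertibility of the principal fundamental solution $U_t$ of \eqref{eq:ODEZt} (recorded in the Background) and the Chapman--Kolmogorov identity $U_{t,s}=U_t U_s^{-1}$, together with boundedness of the generators $Z_t$, $A_{11}$ and $\Delta$.

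For formula \eqref{eq:Ulambda}, I would pass to an ``interaction picture'' with respect to $U_t$ by setting $W_t = U_t^{-1} U_t^\lambda \in \mathscr{B}(\mathcal{B})$. Differentiating and using $\dot{U}_t = Z_t U_t$, $\dot{U}_t^\lambda = (Z_t + \lambda A_{11}) U_t^\lambda$ together with the standard identity $\tfrac{d}{dt} U_t^{-1} = -U_t^{-1} \dot{U}_t U_t^{-1}$, the $Z_t$-contributions cancel and one is left with the simpler equation
\begin{equation*}
\dot{W}_t = \lambda\, U_t^{-1} A_{11} U_t^\lambda, \qquad W_0 = I.
\end{equation*}
Integrating from $0$ to $t$ and multiplying on the left by $U_t$ yields exactly \eqref{eq:Ulambda}, since $U_t U_{t'}^{-1} = U_{t,t'}$. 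Equivalently, one could verify directly that the right-hand side of \eqref{eq:Ulambda} satisfies \eqref{eq:ODEZtA} and reduces to $I$ at $t=0$, and appeal to uniqueness of solutions for this bounded, strongly continuous family of generators.

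For \eqref{eq:Vdef} the reasoning is identical after replacing the unperturbed solution $U_t$ by $U_t^\lambda$, which by the first part of the proposition is itself a principal fundamental solution and hence invertible. Setting $\tilde{W}_t = (U_t^\lambda)^{-1} V_t^\lambda$, differentiating, and using $\dot{V}_t^\lambda = (Z_t + \lambda A_{11} + \lambda \Delta) V_t^\lambda$ from \eqref{eq:VonNeumann}, the $Z_t + \lambda A_{11}$ terms cancel against those produced by $(U_t^\lambda)^{-1}$, leaving $\dot{\tilde{W}}_t = \lambda\, (U_t^\lambda)^{-1} \Delta V_t^\lambda$. Integrating and multiplying on the left by $U_t^\lambda$ gives \eqref{eq:Vdef}.

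No deep obstacle arises; the whole argument is mechanical once invertibility of $U_t$ is in hand. The only technical points worth monitoring are the differentiability of $t\mapsto U_t^{-1}$ (immediate from differentiating $U_t U_t^{-1}=I$) and the exchange of $\tfrac{d}{dt}$ with $\int_0^t$ when verifying the ODE for the right-hand sides; both follow from uniform boundedness of $A_{11}$, $\Delta$, $Z_t$ and of the propagators on compact subintervals of $\reals_+$, so dominated convergence suffices. It is precisely because the matter is routine that the authors cite Kato or ref.~\cite{Szczygielski2020} rather than reproving it.
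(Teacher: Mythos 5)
Your Duhamel argument is correct: differentiating $U_t^{-1}U_t^\lambda$ and $(U_t^\lambda)^{-1}V_t^\lambda$ cancels the unperturbed generators and integration yields precisely \eqref{eq:Ulambda} and \eqref{eq:Vdef}, with all technical points (boundedness of $Z_t$, $A_{11}$, $\Delta$ and differentiability of $t\mapsto U_t^{-1}$) handled as you note. The paper deliberately omits a proof and defers to Kato and ref.~\cite{Szczygielski2020}, where exactly this standard variation-of-parameters reasoning is carried out, so your proposal coincides with the intended argument.
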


We emphasize here that despite similarity between equations \eqref{eq:ODEZt} and \eqref{eq:ODEZtA}, reducibility of the latter can not be a priori assured and one is not allowed to demand a product structure of $U_{t}^{\lambda}$ similar to \eqref{eq:U0t}. Nevertheless, a formal perturbed expression \eqref{eq:Ulambda} remains valid.

\subsection{Completely positive evolution}
\label{sec:CompletelyPositiveEvolution}

In this section we will focus on deriving the dynamical equation, governing evolution of of subsystem S. We start with defining a linear epimorphism $W_{t}^{\lambda} : \mathcal{B} \to \mathcal{B}_{0}$ by
\begin{equation}\label{eq:Wdef}
	W_{t}^{\lambda} = \proj{0}V_{t}^{\lambda}\proj{0},
\end{equation}
which will effectively encode the evolution of S. This is achieved by defining \emph{reduced density matrix} $\rho_{t}^{\lambda}$ subject to formula
\begin{equation}
	\rho_{t}^{\lambda} \otimes \envdens = W_{t}^{\lambda}(\sigma_0),
\end{equation}
for some $\sigma_0 \in\mathcal{B}$. Introducing $\rho_0 = \ptr{\hilberte}{\sigma_0}$, we immediately have 
\begin{equation}
	\rho_{t}^{\lambda} = \ptr{\hilberte}{W_{t}^{\lambda}(\sigma_0)} = \Lambda_{t}^{\lambda}(\rho_0),
\end{equation}
where we have introduced a notion of one-parameter \emph{quantum dynamical map} $\Lambda_{t}^{\lambda}$. We will focus on structure of map $\Lambda_{t}^{\lambda}$ in so-called \emph{weak coupling limit}, which is mathematically realizable by performing a formal limiting procedure $\lambda \to 0^+$, which will be the main result of this section. In particular we will show that under weak coupling limit regime, there exists a well-defined function $t\mapsto \Lambda_{t}$ which satisfies a \emph{reducible} Markovian Master Equation (under previously introduced assumptions), and therefore admits appropriate factorized form, being, at the same time, a well-behaved quantum dynamics, i.e.~a completely positive, trace norm contraction on $\matrd$.

\subsubsection{Weak coupling limit}

By combining \eqref{eq:Vdef} with \eqref{eq:Wdef} and substituting recursively formula for $W^{\lambda}_{t}$ in place of $W_{t'}^{\lambda}$ under the integral, one obtains the celebrated \emph{integral Nakajima-Zwanzig equation}
\begin{equation}
		W^{\lambda}_{t} = \proj{0} U_{t} + \lambda^2 \int\limits_{0}^{t} dt' \int\limits_{0}^{t'} U_{t,t'} A_{01} U^{\lambda}_{t',t''} A_{10} W^{\lambda}_{t''} \, dt'' ,
\end{equation}
which can be shown from property $\proj{i}\proj{j} = \delta_{ij}\proj{j}$ and relation $\comm{\proj{0}}{U_t} = 0$, which can be checked by computation. Next, we apply two useful transformations: \emph{first}, we switch over to interaction picture generated by $Z_t$, i.e.~we define new map $\tilde{W}_{t}^{\lambda} = U_{t}^{-1} W_{t}^{\lambda}$, effectively ``rolling back'' the free part of evolution; \emph{second}, we introduce a rescaled time $t \to \lambda^2 t$ in order to arrive, after some algebra, at the equation
\begin{equation}\label{eq:Wtilde}
	\tilde{W}_{t}^{\lambda} = \proj{0} + \int\limits_{0}^{t} U_{\lambda^{-2}s}^{-1} K_{t-s,s}^{\lambda} U_{\lambda^{-2}s} \tilde{W}_{s}^{\lambda} \, ds,
\end{equation}
where the \emph{memory kernel} $K_{t,s}^{\lambda}$ is given by
\begin{equation}\label{eq:KlambdaKernel}
	K_{t,s}^{\lambda} = \int\limits_{0}^{\lambda^{-2}t} \left(U_{x+\lambda^{-2}s,\lambda^{-2}s}\right)^{-1} A_{01} U^{\lambda}_{x+\lambda^{-2}s,\lambda^{-2}s}A_{10} \, dx.
\end{equation}
Now we are ready to perform a formal weak coupling limit, i.e.~consider how the solution $\tilde{W}_{t}^{\lambda}$ behaves in regime of very small $\lambda$. Our general result, which we will formulate in next paragraphs, is heavily inspired by original works by Davies and Spohn \cite{Davies1974,Davies1978} and technically similar to \cite{Szczygielski2020}.

We start with a definition of so-called \emph{auto-correlation functions of the environment}, given as an expectation value
\begin{equation}\label{eq:AutocorrDef}
	f_{\mu\nu}(t) = \tr{\left(e^{i\hame t}R_{\mu}^{\hadj}e^{-i\hame t}R_\nu \envdens\right)}, \quad t\in\reals_+ .
\end{equation}
Just like in the original approach, integrability of those functions will be a necessary condition for consistency of our computations. In particular, the following simple lemma applies:

\begin{lemma}\label{lemma:L1}
Let all functions $f_{\mu\nu} \in L^1 (\reals_+)$ and let $\xi_y : \reals_+ \to \reals_+$ be given via
\begin{equation}
	\xi_y (t) = \left\| \left(U_{t+s,s}\right)^{-1} A_{01} U_{t+s,s}A_{10} \right\|
\end{equation}
with $\| \cdot \|$ denoting operator norm in $\mathscr{B}(\mathcal{B}_0)$. Then, $\xi_s \in L^1 (\reals_+)$ for all $s\in\reals_+$.
\end{lemma}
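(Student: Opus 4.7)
The plan is to establish a pointwise bound of the form $\xi_s(t) \le C\sum_{\mu,\nu}|f_{\mu\nu}(t)|$ with $C$ independent of $t$, so that $\xi_s\in L^1(\reals_+)$ follows immediately from the hypothesis $f_{\mu\nu}\in L^1(\reals_+)$ and the triangle inequality. To obtain such a bound I would compute $A_{01} U_{t+s,s} A_{10}$ explicitly on $\mathcal{B}_0 = \matrd\otimes\envdens$ and identify the environment traces that arise with the auto-correlation functions from \eqref{eq:AutocorrDef}.

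First, using $\tr{\envdens R_\mu}=0$ one sees that $\proj{0}$ annihilates each summand of $[\hint, a\otimes\envdens]$, so $A_{10}(a\otimes\envdens) = -i\sum_\mu(S_\mu a\otimes R_\mu\envdens - aS_\mu\otimes\envdens R_\mu)$. Next, since $Z_t$ generates conjugation by the unitary $V_t = u_t\otimes e^{-it\hame}$, the propagator takes the form $U_{t+s,s}(c\otimes d) = u_{t,s}\,c\,u_{t,s}^\hadj\otimes e^{-it\hame}\,d\,e^{it\hame}$ with $u_{t,s}=u_{t+s}u_s^{-1}$ unitary. Finally $A_{01}=-i\proj{0}[\hint,\cdot\,]$ and $\proj{0}(c\otimes d) = \tr{d}\,c\otimes\envdens$, so after expanding the four resulting commutator pairs the output is a finite sum of terms of schematic form $g_{\mu\nu}(t)\,\pi_{\mu\nu}(a)\otimes\envdens$, where each $\pi_{\mu\nu}$ is a composition of left- and right-multiplication by $S_\mu,S_\nu$ interleaved with the unitary conjugation $a\mapsto u_{t,s} a u_{t,s}^\hadj$, and each scalar $g_{\mu\nu}(t)$ is of the shape $\tr{\envdens\,e^{it\hame}R_\alpha e^{-it\hame}R_\beta}$ for some $\alpha,\beta\in\{\mu,\nu\}$; by cyclicity of the trace and $[\hame,\envdens]=0$, each such factor coincides (up to complex conjugation and an index permutation) with one of the $f_{\alpha\beta}(t)$ of \eqref{eq:AutocorrDef}.

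Taking trace norms, using $\|a\otimes\envdens\|_1 = \|a\|_1$, unitary invariance of $\|\cdot\|_1$ under conjugation by $u_{t,s}$, and submultiplicativity of the operator norm, we get $\|\pi_{\mu\nu}(a)\|_1 \le \|S_\mu\|\,\|S_\nu\|\,\|a\|_1$ and hence $\|A_{01}U_{t+s,s}A_{10}\|_{\mathscr{B}(\mathcal{B}_0)} \le C\sum_{\mu,\nu}|f_{\mu\nu}(t)|$ for a finite constant $C$ depending only on $\max_\mu\|S_\mu\|$. Since $(U_{t+s,s})^{-1}$ acts on $\mathcal{B}_0$ as conjugation by the unitary $V_{t+s,s}^\hadj$, it is a trace-norm isometry, so the same bound persists for $\xi_s(t)$ and $L^1$-integrability then follows. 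The only potentially subtle step is the bookkeeping of verifying that every scalar arising from the partial trace is indeed one of the $f_{\mu\nu}$ (possibly conjugated or with indices permuted), rather than an unrelated environment correlation outside the $L^1$ hypothesis; no further analytic difficulty arises, since all system and environment propagators enter only through unitary conjugations of operator norm $1$, so the $s$-dependence never produces any unbounded factor.
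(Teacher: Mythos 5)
Your proposal is correct and follows essentially the same route as the paper: both reduce $\xi_s(t)$ to a double commutator, take the partial trace to surface the auto-correlation functions (with the same $\hint\to\hint^\hadj$ bookkeeping), and bound everything else by $\prod\|S_\mu\|$ since all propagators enter as unitary conjugations of trace norm $1$, yielding $\xi_s(t)\leqslant C\sum_{\mu\nu}|f_{\mu\nu}(t)|$ uniformly in $s$. The only cosmetic difference is that you use the propagator directly as conjugation by $u_{t,s}\otimes e^{-it\hame}$, whereas the paper inserts the factorized form $U_t=P_t e^{t\bar Z}$; the unitarity argument is identical.
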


\begin{proof}
Computations are quite straightforward and in a large extent, identical to those in \cite{Szczygielski2020} so we are not too precise here. We put $U_{t,s} = U_{t} U_{s}^{-1}$ for $U_{t} = P_{t} e^{t\bar{Z}}$ by Proposition \ref{prop:Ut0Reducible}, such that for any $\rho\otimes\envdens\in\mathcal{B}_0$ we have
\begin{equation}
	\xi_{s}(t) = \sup_{\| \rho\otimes\envdens \|\leqslant 1} \left\|\proj{0} P_{s} e^{-t\bar{Z}} P_{t+s}^{-1} A P_{t+s} e^{t\bar{Z}} P_{s}^{-1} A (\rho\otimes\envdens)\right\|.
\end{equation}
Next, we apply definition \eqref{eq:Derivations} of map $A$, rewrite expression under the norm as appropriate double commutator and purposefully change $H_{\text{int.}}(t)$ to $H_{\text{int.}}(t)^{\hadj}$ in the outer commutator. After expanding and calculating the partial trace (as it appears in projection $\proj{0}$) we retrieve expressions for auto-correlation functions \eqref{eq:AutocorrDef} and arrive at an upper bound for $\xi_s (t)$ after simple algebra,
\begin{align}
	\xi_{s}(t) &\leqslant 4 \sum_{\mu\nu} \left|f_{\mu\nu}(t)\right| \| S_\mu \| \| S_\nu \|,
\end{align}
which, since $f_{\mu\nu} \in L^1 (\reals_+)$, is integrable for all $s$ such that
\begin{equation}
	\int\limits_{0}^{\infty} \xi_s (t) dx \leqslant C \max_{\mu,\nu}\| f_{\mu\nu} \|_1 
\end{equation}
for $\| \cdot \|_1$ being the $L^1$-norm and $C = 4 \sum_{\mu\nu}\| S_\mu \| \| S_\nu \| < \infty$.
\end{proof}

Let $t_* \in \reals_+$ and let $\mathscr{V} = \mathscr{C}_0 ([0,t_*], \mathcal{B})$ denote a Banach space of continuous, $\mathcal{B}$-valued functions on interval $[0,t_*]$, complete with respect to supremum norm
\begin{equation}
	\| \varphi \|_\mathscr{V} = \sup_{t\in [0,t_*]}\| \varphi(t)\|, \quad \varphi\in\mathscr{V}.
\end{equation}
On space $\mathscr{V}$, we define bounded linear Volterra operator $\mathcal{H}_\lambda$, acting on function $\varphi \in \mathscr{V}$ by
\begin{equation}\label{eq:HlambdaOperatorDef}
	\mathcal{H}_\lambda (\varphi)(t) = \int\limits_{0}^{t} U_{\lambda^{-2}s}^{-1} K_{t-s,s}^{\lambda} U_{\lambda^{-2}s}(\varphi(s)) \, ds,
\end{equation}
where kernel $K_{t,s}^{\lambda}$ is given by \eqref{eq:KlambdaKernel}. It is clear from a form of the integrand, that $\mathcal{H}_\lambda$ constitutes for a solution of the integral equation \eqref{eq:Wtilde}, in a sense that if we define a function $w_\lambda \in \mathscr{V}$ by setting $w_\lambda (t) = \tilde{W}^{\lambda}_{t}(\sigma_0)$ for some initial point $\sigma_0 \in \mathcal{B}$, then this function will be a solution to the equation
\begin{equation}
	w_\lambda = \proj{0}(\sigma_0) + \mathcal{H}_{\lambda} (w_\lambda).
\end{equation}
Thus, $\mathcal{H}_\lambda$ represents an evolution of reduced state of the system $S$, expressed in form of operator on functional space, rather than a map on $\matrd$. In what follows, we will be approximating this evolution by a simplified one, and validity of this approximation will be granted provided the coupling constant $\lambda$ is small enough. This approximation will be carried out in two stages just like in \cite{Davies1974}. For this to happen, we introduce, for $\zeta : \reals \to \mathscr{B}(\mathcal{B})$ being an operator-valued function, its \emph{local average} $\zeta^\sharp$ via formula
\begin{equation}\label{eq:TimeAvrg}
	\zeta^\sharp = \lim_{\delta\to\infty} \frac{1}{2\delta} \int\limits_{-\delta}^{\delta} U_{\lambda^{-2}t}^{-1} \zeta(t) U_{\lambda^{-2}t} \, dt .
\end{equation}

Next, we show that also in case of quasiperiodic setting, local averaging is a well-defined operation, which will lead eventually to correct formulation of dynamical equations.

\begin{proposition}
Let all $f_{\mu\nu} \in L^1 (\reals_+)$. Mapping $t\mapsto \tilde{K}_{t}^{\lambda}$, $t\in\reals_+$ given by integral expression
\begin{equation}\label{eq:KtildesDefinition}
	\tilde{K}_{t}^{\lambda} = \int\limits_{0}^{\infty} \left(U_{x+\lambda^{-2}t,\lambda^{-2}t}\right)^{-1} A_{01} U_{x+\lambda^{-2}t,\lambda^{-2}t}A_{10} \, dx
\end{equation}
is a well-defined, continuous function with values in $\mathscr{B}(\mathcal{B}_0)$ for all $\lambda \in\reals_+$. Moreover, under assumptions \ref{assm:RatInd}--\ref{assm:OmegaCF}, its local average $K = (\tilde{K}^{\lambda}_{\cdot})^{\sharp}$ is a bounded operator on $\mathcal{B}$, independent of $\lambda$.
\end{proposition}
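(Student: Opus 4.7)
The plan is to prove the statement in two stages. First, I would establish the pointwise existence of $\tilde{K}_{t}^{\lambda}$ and its continuity in $t$ using the uniform integrability bound already provided by Lemma \ref{lemma:L1}. Second, I would analyze the local average by expanding the integrand in the Fourier modes of $P_t$ (Lemma \ref{lemma:PtConvergence}) and in the spectral projections of $\comm{\hav}{\cdot\,}$ (Lemma \ref{lemma:PexptZdecomposition}), so that the averaging procedure \eqref{eq:TimeAvrg} isolates only the resonant contributions, which by assumptions \ref{assm:RatInd} and \ref{assm:OmegaCF} turn out to be manifestly $\lambda$-independent.

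For the first stage, observe that for each fixed $t,\lambda$, the operator norm of the integrand in \eqref{eq:KtildesDefinition} coincides with $\xi_{\lambda^{-2}t}(x)$, and hence by Lemma \ref{lemma:L1} lies in $L^1(\reals_+)$ with a bound $\int_0^\infty \xi_s(x)\,dx \leqslant C\max_{\mu\nu}\|f_{\mu\nu}\|_1$ uniform in $s$. Thus the Bochner integral converges and defines a bounded operator on $\mathcal{B}_0$ satisfying the same uniform estimate. Continuity in $t$ then follows from dominated convergence applied to the norm-continuous integrand, whose continuous dependence on $t$ is inherited from $U_t$ through Proposition \ref{prop:Ut0Reducible}.

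For the second stage, I would substitute $U_s = P_s e^{s\bar{Z}}$ into both occurrences of $U$ in \eqref{eq:KtildesDefinition}, exploit commutativity $\comm{\proj{0}}{U_s}=0$, and then insert the Fourier expansion $P_s = \sum_{\vec{n}}\hat{P}_{\vec{n}}e^{i\vec{n}\cdot\vec{\Omega}s}$ together with the decomposition $\proj{0}e^{s\bar{Z}}=\sum_{\omega} Q_\omega e^{-is\omega}$. Uniform norm convergence of the Fourier series for $P_t$, combined with absolute integrability of the resulting Laplace-type scalar integrals of the form $\int_0^\infty e^{ix(\omega+\vec{n}\cdot\vec{\Omega})} f_{\mu\nu}(x)\,dx$, permits one to exchange the infinite sums with the Bochner integral over $x$. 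The outcome is a representation of $\tilde{K}_t^\lambda$ as an absolutely convergent series whose generic term carries a prefactor $e^{i\lambda^{-2}t(\omega_1 - \omega_2 + \vec{n}\cdot\vec{\Omega})}$ multiplied by a coefficient depending neither on $t$ nor on $\lambda$. Applying \eqref{eq:TimeAvrg} then introduces an additional conjugation by $U_{\lambda^{-2}t}$, whose expansion contributes analogous oscillatory factors, and the elementary identity
\begin{equation}
	\lim_{\delta\to\infty}\frac{1}{2\delta}\int\limits_{-\delta}^{\delta} e^{ict}\,dt = \begin{cases} 1, & c = 0, \\ 0, & c\neq 0, \end{cases}
\end{equation}
eliminates all non-resonant contributions. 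Rational independence (assumption \ref{assm:RatInd}) forces the integer vector $\vec{n}$ in any surviving term to vanish, and $\vec{\Omega}$-congruence freedom (assumption \ref{assm:OmegaCF}) then forces the remaining Bohr quasi-frequencies to pair off trivially; the surviving terms therefore carry no $\lambda$-dependence. Boundedness of $K$ is inherited from the uniform estimate on $\tilde{K}_t^\lambda$ obtained in the first stage.

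The main obstacle will be rigorously justifying the interchange of the multiple infinite sums, the Bochner integrals in $x$, and the averaging limit $\delta\to\infty$. This requires combining the uniform norm convergence guaranteed by Lemma \ref{lemma:PtConvergence} with the $L^1$ bound from Lemma \ref{lemma:L1} so that Fubini and dominated convergence apply at each step; in particular one must verify absolute convergence of the triple sum over $\vec{n}$ and the two spectral indices $\omega_1,\omega_2$ before the averaging limit is taken, and check that conjugation by the extra $U_{\lambda^{-2}t}$ inside \eqref{eq:TimeAvrg} does not destroy this absolute convergence.
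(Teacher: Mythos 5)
Your proposal follows essentially the same route as the paper: well-definedness and continuity are obtained from the $L^1$ bound of Lemma \ref{lemma:L1}, and the local average is computed by expanding the conjugated integrand in the Fourier modes $\hat{P}_{\vec{n}}$ of $P_t$ and the spectral projections $Q_\omega$, so that the averaging limit produces a Kronecker delta selecting resonant terms, which assumptions \ref{assm:RatInd} and \ref{assm:OmegaCF} reduce to the diagonal, manifestly $\lambda$-independent ones. One point needs tightening: the resonance condition reads $\omega-\omega' = (\vec{m}-\vec{n})\cdot\vec{\Omega}$ with \emph{two} independent Fourier indices $\vec{n},\vec{m}$ (one attached to each interaction-picture factor), so the correct order of the argument is that $\vec{\Omega}$-congruence freedom first forces $\omega=\omega'$, and only then does rational independence force $\vec{n}=\vec{m}$; in particular it is the \emph{difference} of the two indices that vanishes, not the index itself, and the surviving operator $K$ still carries a full sum over $\vec{n}\in\integers^r$, as in \eqref{eq:KlambdaLocalAver}. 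As written, ``rational independence forces the integer vector $\vec{n}$ in any surviving term to vanish'' would, taken literally, discard almost all of $K$. Finally, the paper also derives within this proof a second, Lindblad-type form of $K$ (the pair $\Delta H$, $\bar{D}$ of \eqref{eq:KlambdaLocalAver2}) for later use in Theorem \ref{thm:LambdaCP}; your outline omits this, but it is not needed for the statement as posed.
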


\begin{proof}
We rewrite the integrand using product form \eqref{eq:U0t} of $U_t$, relation $\comm{\proj{0}}{U_t} = 0$ and definitions \eqref{eq:PTetZdefinitions} of $P_t$ and $e^{t\bar{Z}}$ in order to obtain
\begin{align}\label{eq:IntegrandIntegrability}
	\big( U_{x+\lambda^{-2}t,\lambda^{-2}t}\big)^{-1} &A_{01} U_{x+\lambda^{-2}t,\lambda^{-2}t}A_{10} \\
	&= \proj{0}e^{-x\bar{Z}}P_{x+\lambda^{-2}t}^{-1} A P_{x+\lambda^{-2}t} e^{x\bar{Z}} A \proj{0} \nonumber \\
	&= -\proj{0} \comm{e^{-x\bar{Z}}P_{x+\lambda^{-2}t}^{-1}(H_{\mathrm{int.}})}{\comm{H_{\mathrm{int.}}}{\proj{0}(\cdot)}}. \nonumber
\end{align}
After using \eqref{eq:Hint} and calculating the double commutator we see that the action of integrand over any $a\in\mathcal{B}$ is
\begin{equation}
	-\sum_{\mu\nu} \proj{0} \comm{S_{\mu}^{\lambda}(x+\lambda^{-2}t,t)\otimes R_{\mu}(x)}{\comm{S_{\nu}\otimes R_{\nu}}{\rho\otimes\envdens}}
\end{equation}
where $\rho\otimes\envdens = \proj{0}(a)$ and $S_{\mu}^{\lambda}(x,t)$, $R_{\mu}(x)$ are given by
\begin{equation}
	S_{\mu}^{\lambda} (x,t) = e^{i\bar{H}x} p_{x}^{\hadj} S_{\mu} p_{x} e^{-i\bar{H}x}, \quad R_{\mu} (x) = e^{i\hame x} R_\mu e^{-i\hame x}.
\end{equation}
Expanding the double commutator, we obtain
\begin{align}\label{eq:Integrand}
	-\sum_{\mu\nu}\proj{0} \comm{S_{\mu}^{\lambda}&(x+\lambda^{-2}t,t)^\hadj \otimes R_{\mu}(x)^{\hadj}}{S_\nu \rho \otimes R_\nu \envdens} \\
	&+ \sum_{\mu\nu} \proj{0} \comm{S_{\mu}^{\lambda}(x+\lambda^{-2}t,t)\otimes R_\mu (x)}{\rho S_{\nu}^{\hadj}\otimes\envdens R_{\nu}^{\hadj}},\nonumber
\end{align}
where we purposefully changed $H_{\mathrm{int.}}$ to $H_{\mathrm{int.}}^{\hadj}$ in order to end up with the usual form of auto-correlation functions. By Lemma \ref{lemma:PtConvergence}, matrix-valued functions $t\mapsto p_{t}^{\hadj} S_\mu p_{t}$ admit the Fourier series expansion
\begin{equation}\label{eq:SmuDecompFourier}
	p_{t}^{\hadj} S_\mu p_{t} = \sum_{\vec{n}\in\integers^r} \hat{S}_{\mu,\vec{n}} e^{i\vec{n}\cdot\vec{\Omega}t},
\end{equation}
converging pointwise everywhere in $\reals_+$. Now, let us introduce a following spectral decomposition of averaged Hamiltonian $\hav = \sum_{k=1}^{d} \epsilon_k P_k$ for $\epsilon_k \in \reals$ and $\{P_k\}$ being the spectral projections. This allows us to define, for each $\mu$, a set of new matrices $\{S_{\mu\vec{n}\omega}\}$ by setting
\begin{equation}\label{eq:SmunDEcompOmega}
	S_{\mu\vec{n}\omega} = \sum_{(k,l)\sim \omega} P_{k} \hat{S}_{\mu,\vec{n}} P_{l},
\end{equation}
where notation $(k,l)\sim\omega$ means that we perform the summation over such indices $k,l$, that $\epsilon_k - \epsilon_l = \omega$ (see also proof of Lemma \ref{lemma:PexptZdecomposition} in Appendix \ref{app:Supplement} for similar construction). One can then confirm with straight algebra that $S_{\mu\vec{n}\omega}$ satisfy relations
\begin{align}\label{eq:HScommRel}
	\comm{\hav}{S_{\mu\vec{n}\omega}} = \omega S_{\mu\vec{n}\omega}, \quad \comm{\hav}{S_{\mu\vec{n}\omega}^{\hadj}} = -\omega S_{\mu\vec{n}\omega}^{\hadj}, \\
	e^{i\hav t} S_{\mu\vec{n}\omega} e^{-i\hav t} = e^{i\omega t} S_{\mu\vec{n}\omega}.\nonumber
\end{align}
Notice, that \eqref{eq:SmuDecompFourier}, \eqref{eq:SmunDEcompOmega} and completeness of $\{P_k\}$ together yield
\begin{equation}\label{eq:SmunDecompositions}
	\hat{S}_{\mu,\vec{n}} = \sum_\omega S_{\mu\vec{n}\omega}, \quad S_\mu = \sum_{\vec{n}\in\integers^r} \hat{S}_{\mu,\vec{n}},
\end{equation}
with the latter series converging due to pointwise convergence of \eqref{eq:SmuDecompFourier}, and in consequence,
\begin{equation}\label{eq:SmxtDecomposition}
	S_{\mu}^{\lambda} (x,t) = \sum_{\vec{n}\in\integers^r}\sum_{\omega} e^{i\omega_{\vec{n}} x} S_{\mu\vec{n}\omega},
\end{equation}
with $\omega_\vec{n} = \omega + \vec{n}\cdot\vec{\Omega}$ being a \emph{shifted Bohr quasi-frequency}. Then, we calculate partial trace dictated by $\proj{0}$, and use \eqref{eq:SmxtDecomposition} to put \eqref{eq:Integrand} as
\begin{align}\label{eq:IntegrandExplicit}
	\sum_{\mu\nu}  \sum_{\vec{n},\vec{m}\in\integers^r}\sum_{\omega \omega'} &\Big[ e^{-i\omega_{\vec{n}} x} e^{-i(\omega_{\vec{n}}-\omega^{\prime}_{\vec{m}})\lambda^{-2}t}f_{\mu\nu}(x)\Big(S_{\nu\vec{m}\omega'} \rho S_{\mu\vec{n}\omega}^{\hadj}-S_{\mu\vec{n}\omega}^{\hadj} S_{\nu\vec{m}\omega'} \rho\Big) \\
	&+ e^{i\omega_{\vec{n}} x} e^{i(\omega_{\vec{n}}-\omega^{\prime}_{\vec{m}})\lambda^{-2}t}\overline{f_{\mu\nu}(x)} \Big( S_{\mu\vec{n}\omega}\rho S_{\nu\vec{m}\omega'}^{\hadj}-\rho S_{\nu\vec{m}\omega'}^{\hadj}  S_{\mu\vec{n}\omega}\Big)\Big] \nonumber
\end{align}
converging uniformly for all $t\in\reals_+$. Now it is clear that if $f_{\mu\nu} \in L^1 (\reals_+)$, then integral over $x\in [0,\infty)$ in \eqref{eq:KtildesDefinition} is well-defined, since after integrating \eqref{eq:IntegrandExplicit}, one obtains \emph{one-sided Fourier transforms} of auto-correlation functions, taken at shifted Bohr quasi-frequencies $\omega + \vec{n}\cdot\vec{\Omega}$ for $\vec{n}\in\integers^r$. By Lemma \ref{lemma:L1}, $\tilde{K}^{\lambda}_{t}$ is bounded on $\mathcal{B}$ for all $t,\lambda\in\reals_+$. Continuity is then obvious by continuous dependence of propagator $U_{t,s}$ on both $t,s$.

In next few paragraphs we will calculate the local average $K$ of function $t\mapsto \tilde{K}^{\lambda}_{t}$ and we will demonstrate e.g.~that it does not indeed depend on $\lambda$. We will, however achieve this in two distinct ways, putting $K$ into two equivalent forms. First of those will be convenient for proving certain convergence in succeeding Theorem \ref{thm:ConvergenceofNets}, and we will use the second one for showing complete positivity in Theorem \ref{thm:LambdaCP}. We apply formula \eqref{eq:TimeAvrg} directly to \eqref{eq:KtildesDefinition} to obtain
\begin{align}\label{eq:KtildeAverage}
	(\tilde{K}^{\lambda}_{\cdot})^{\sharp} = \lim_{\delta\to\infty} \frac{1}{2\delta} \int\limits_{-\delta}^{\delta} dt \int\limits_{0}^{\infty} \left(U_{x+\lambda^{-2}t}\right)^{-1} A_{01} U_{x+\lambda^{-2}t,\lambda^{-2}t}A_{10} U_{\lambda^{-2}t} \, dt,
\end{align}
where we extend $\tilde{K}_{t}^{\lambda}$ onto $t<0$ in straightforward manner. Next, we employ product form \eqref{eq:U0t} of $U_t$, relation $\comm{\proj{0}}{U_t} = 0$ and spectral decomposition \eqref{eq:PexptZdecomposition} in order to put the integrand in a form
\begin{align}\label{eq:IntegrandNonexplicit}
\left(U_{x+\lambda^{-2}t}\right)^{-1} &A_{01} U_{x+\lambda^{-2}t,\lambda^{-2}t}A_{10} U_{\lambda^{-2}t}  = \\
	&= \proj{0} e^{-(x+\lambda^{-2}t)\bar{Z}} A(x+\lambda^{-2}t) e^{x\bar{Z}} A(\lambda^{-2}t) \proj{0} e^{\lambda^{-2}t\bar{Z}} \nonumber \\
	& = \sum_{\omega,\omega'} e^{i\omega x} e^{i\lambda^{-2}(\omega-\omega')t} Q_\omega A(x+\lambda^{-2}t) e^{x\bar{Z}} A(\lambda^{-2}t) Q_{\omega'},\nonumber
\end{align}
where we introduced quasiperiodic function $t\mapsto A(t)$ given via
\begin{equation}\label{eq:AtDef}
	A(t) = P_{t}^{-1} A P_t .
\end{equation}
By Lemma \ref{lemma:PtConvergence}, both $P_t$, $P_{t}^{-1}$ admit uniformly convergent Fourier series and so, $A(t)$ admits such also,
\begin{equation}\label{eq:AtDefExp}
	A(t) = \sum_{\vec{n}\in\integers^r} \hat{R}_{\vec{n}} e^{i\vec{n}\cdot\vec{\Omega}t},
\end{equation}
where equality holds everywhere in $\reals_+$; this allows to cast the integrand into
\begin{equation}
	\sum_{\omega,\omega'} \sum_{\vec{n},\vec{m}\in\integers^r} e^{i\lambda^{-2}(\omega_\vec{n}-\omega^{\prime}_{\vec{m}})t} e^{i\omega_{\vec{n}}x}  Q_\omega \hat{R}_{\vec{n}} e^{x\bar{Z}} \hat{R}_{-\vec{m}} Q_{\omega'},
\end{equation}
converging pointwise in $\mathscr{B}(\mathcal{B})$. Carrying out the integration with respect to variable $t$, we see, that
\begin{equation}\label{eq:IntegralExpLambda}
	\lim_{\delta\to\infty} \frac{1}{2\delta} \int\limits_{-\delta}^{\delta} e^{i\lambda^{-2}(\omega_\vec{n}-\omega^{\prime}_{\vec{m}})t} dt = \delta_{\omega_\vec{n}\omega_{\vec{m}}^{\prime}},
\end{equation}
which may be checked with quick computation. The resulting Kronecker delta is nonzero if and only if both shifted quasi-frequencies agree, $\omega_{\vec{n}} = \omega^{\prime}_{\vec{m}}$, or if
\begin{equation}
	\omega - \omega' = (\vec{m}-\vec{n})\cdot\vec{\Omega},
\end{equation}
for some $\vec{n},\vec{m}\in\integers^r$. However, in presence of assumption \ref{assm:OmegaCF}, this is possible only in case $\omega = \omega'$. This yields, that $\delta_{\omega_\vec{n}\omega_{\vec{m}}^{\prime}}$ is nonzero only if
\begin{equation}
	(\vec{n}-\vec{m})\cdot\vec{\Omega} = 0,
\end{equation}
which, by rational independence of frequencies $\{\Omega_i\}$ is in turn possible only if $\vec{n}=\vec{m}$. This yields, that in fact $\delta_{\omega_\vec{n}\omega_{\vec{m}}^{\prime}} = \delta_{\omega\omega'}\delta_{\vec{n}\vec{m}}$ and \eqref{eq:KtildeAverage} transforms into the \emph{first form} of $K$,
\begin{equation}\label{eq:KlambdaLocalAver}
	K = (\tilde{K}^{\lambda}_{\cdot})^{\sharp} = \sum_{\omega}\sum_{\vec{n}\in\integers^r}\int\limits_{0}^{\infty} e^{i\omega_{\vec{n}}x}  Q_\omega \hat{R}_{\vec{n}} e^{x\bar{Z}} \hat{R}_{-\vec{n}} Q_{\omega} dx.
\end{equation}
To obtain the second form, we rewrite the integrand in \eqref{eq:KtildeAverage} in a fashion similar to \eqref{eq:IntegrandIntegrability} and \eqref{eq:Integrand}, obtaining
\begin{align}
	-\proj{0} \comm{&e^{-(x+\lambda^{-2}t)\bar{Z}} P^{-1}_{x+\lambda^{-2}t}(H_{\mathrm{int.}})}{\comm{e^{-\lambda^{-2}t\bar{Z}}P^{-1}_{\lambda^{-2}t}(H_{\mathrm{int.}})}{\rho\otimes\envdens}} = \\
	=-&\sum_{\mu,\nu}\proj{0} \comm{S_{\mu}^{\lambda}(x+\lambda^{-2}t)^{\hadj}\otimes R_{\mu}(x+\lambda^{-2}t)^{\hadj}}{S_{\nu}^{\lambda}(\lambda^{-2}t)\rho \otimes R_{\nu}(\lambda^{-2}t)\envdens}\nonumber\\
	+ &\sum_{\mu,\nu}\proj{0}\comm{S_{\mu}^{\lambda}(x+\lambda^{-2}t)\otimes R_{\mu}(x+\lambda^{-2}t)}{\rho S^{\lambda}_{\nu}(\lambda^{-2}t)^{\hadj} \otimes \envdens R_{\nu}(\lambda^{-2}t)^{\hadj}}\nonumber
\end{align}
within the same notation. Then, again using \eqref{eq:SmxtDecomposition} and calculating the partial trace, we obtain
\begin{align}
	\sum_{...} &e^{-i(\omega_{\vec{n}}-\omega^{\prime}_{\vec{m}})\lambda^{-2}t}\Big[ f_{\mu\nu}(x) e^{-i\omega_{\vec{n}}x} \left(S_{\nu\vec{m}\omega'}\rho S_{\mu\vec{n}\omega}^{\hadj} - S_{\mu\vec{n}\omega}^{\hadj}S_{\nu\vec{m}\omega'}\rho\right) \\
	&+ \overline{f_{\nu\mu}(x)} e^{i\omega^{\prime}_{\vec{m}}x} \left(S_{\nu\vec{m}\omega'}\rho S_{\mu\vec{n}\omega}^{\hadj}-\rho S_{\mu\vec{n}\omega}^{\hadj} S_{\nu\vec{m}\omega'}\right) \Big]\otimes\envdens \nonumber
\end{align}
where $\sum_{...}$ indicates summation over all indices $\mu,\nu$, $\omega,\omega'$ and $\vec{n},\vec{m} \in \integers^r$. Next, we integrate the above expression over $x\in[0,\infty )$, which is carried out by computing the one-sided Fourier transforms of functions $f_{\mu\nu}$ in such way, that
\begin{equation}
	\int\limits_{0}^{\infty} f_{\mu\nu}(x) e^{-i\omega x} dx = \frac{1}{2} h_{\mu\nu}(\omega ) + i\zeta_{\mu\nu}(\omega),
\end{equation}
where $h_{\mu\nu}(\omega)$ stands for a full (double-sided) Fourier transform of $f$, i.e.~$h_{\mu\nu}(x) = \int_{-\infty}^{\infty} f_{\mu\nu}(x) e^{-i\omega x} dx$, and $\zeta_{\mu\nu}(\omega)$ is often calculated with application of Sohotzki's formulas. By direct check, matrices $[h_{\mu\nu}(\omega)]_{\mu\nu}$ and $[\zeta_{\mu\nu}(\omega)]_{\mu\nu}$ are, for all $\omega$, Hermitian and $[h_{\mu\nu}(\omega)]_{\mu\nu}$ is positive semidefinite by Bochner's theorem \cite{Alicki2006a,Szczygielski2014}. Now it only remains to compute the integral over $t\in [-\delta,\delta]$ and then execute the limiting procedure $\delta\to\infty$, leading to the \emph{second form} of $K$,
\begin{equation}\label{eq:KlambdaLocalAver2}
	K(a) = \Big(-i \comm{\Delta H}{\rho} + \bar{D}(\rho)\Big) \otimes \envdens ,
\end{equation}
where $a\in\mathcal{B}$ is arbitrary, $\rho\otimes\envdens = \proj{0}(a)$, and $\Delta H$, $\bar{D}$ are given by
\begin{subequations}\label{eq:LambShiftD}
	\begin{equation}
		\Delta H = \sum_{\mu,\nu} \sum_{\vec{n}\in\integers^r}\sum_{\omega} \zeta_{\mu\nu}(\omega + \vec{n}\cdot\vec{\Omega}) S_{\mu\vec{n}\omega}^{\hadj} S_{\nu\vec{n}\omega},
	\end{equation}
	\begin{equation}
		\bar{D}(\rho) = \sum_{\mu,\nu}\sum_{\vec{n}\in\integers^r}\sum_{\omega} h_{\mu\nu}(\omega+\vec{n}\cdot\vec{\Omega}) \left( S_{\nu\vec{n}\omega}\rho S_{\mu\vec{n}\omega}^{\hadj} - \frac{1}{2}\acomm{S_{\mu\vec{n}\omega}^{\hadj}S_{\nu\vec{n}\omega}}{\rho}\right),
	\end{equation}
\end{subequations}
where $\acomm{a}{b} = ab+ba$ stands for the \emph{anticommutator} of matrices $a$ and $b$. Naturally, both forms of $K$ are independent of $\lambda$, as stated. 
\end{proof}

Using commutation relation \eqref{eq:HScommRel}, one checks after straightforward algebra, the following result:

\begin{proposition}\label{prop:CovProp}
We have $\comm{\Delta H}{\hav} = 0$ and $\comm{K}{\comm{\hav}{\cdot\,}}=0$ as maps on $\matrd$.
\end{proposition}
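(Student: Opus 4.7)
The plan is to reduce both claims to termwise identities that follow immediately from the eigenoperator relations \eqref{eq:HScommRel}. Since the series defining $\Delta H$ and $\bar D$ are indexed by $(\mu,\nu,\vec n,\omega)$, and $\bar H$ commutes with itself and is bounded, it suffices to show that each individual summand has the desired commutation property with $\comm{\hav}{\cdot\,}$; linearity then handles the sum.

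\textbf{Step 1: Termwise commutation for $\Delta H$.} Fix $\mu,\nu,\vec n,\omega$ and apply the Leibniz rule to $\comm{\hav}{S_{\mu\vec n\omega}^{\hadj}S_{\nu\vec n\omega}}$. Using $\comm{\hav}{S_{\mu\vec n\omega}^{\hadj}} = -\omega\,S_{\mu\vec n\omega}^{\hadj}$ and $\comm{\hav}{S_{\nu\vec n\omega}} = \omega\,S_{\nu\vec n\omega}$, the two contributions cancel, giving $\comm{\hav}{S_{\mu\vec n\omega}^{\hadj}S_{\nu\vec n\omega}}=0$. Summing over all indices against the scalar weights $\zeta_{\mu\nu}(\omega+\vec n\cdot\vec\Omega)$ yields $\comm{\hav}{\Delta H}=0$, which is the first assertion.

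\textbf{Step 2: Reduction of the second claim.} By \eqref{eq:KlambdaLocalAver2}, on $\mathcal{B}_0$ the map $K$ acts as $\rho\mapsto -i\comm{\Delta H}{\rho}+\bar D(\rho)$ (tensored with $\envdens$). Since $\comm{\hav}{\cdot\,}$ acts only on the system factor and leaves $\envdens$ invariant (as $\proj{0}$ does), it suffices to show that both $-i\comm{\Delta H}{\cdot\,}$ and $\bar D$ commute with $\comm{\hav}{\cdot\,}$ on $\matrd$. For the first, the Jacobi identity and Step 1 give
\begin{equation}
\comm{\comm{\hav}{\rho}}{\Delta H} + \comm{\comm{\rho}{\Delta H}}{\hav} + \comm{\comm{\Delta H}{\hav}}{\rho} = 0,
\end{equation}
and the last term vanishes, so $\comm{\Delta H}{\cdot\,}$ and $\comm{\hav}{\cdot\,}$ commute.

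\textbf{Step 3: Termwise commutation for $\bar D$.} For a single dissipative term, expand
\begin{equation}
\comm{\hav}{S_{\nu\vec n\omega}\rho S_{\mu\vec n\omega}^{\hadj}} = \omega\,S_{\nu\vec n\omega}\rho S_{\mu\vec n\omega}^{\hadj} + S_{\nu\vec n\omega}\comm{\hav}{\rho}S_{\mu\vec n\omega}^{\hadj} - \omega\,S_{\nu\vec n\omega}\rho S_{\mu\vec n\omega}^{\hadj},
\end{equation}
so the $\pm\omega$ terms cancel and only $S_{\nu\vec n\omega}\comm{\hav}{\rho}S_{\mu\vec n\omega}^{\hadj}$ remains. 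For the anticommutator piece, Step 1 gives $\comm{\hav}{S_{\mu\vec n\omega}^{\hadj}S_{\nu\vec n\omega}}=0$, so $\comm{\hav}{\acomm{S_{\mu\vec n\omega}^{\hadj}S_{\nu\vec n\omega}}{\rho}} = \acomm{S_{\mu\vec n\omega}^{\hadj}S_{\nu\vec n\omega}}{\comm{\hav}{\rho}}$. Both expressions coincide with what one obtains by first applying $\comm{\hav}{\cdot\,}$ to $\rho$ and then the dissipative term, establishing termwise commutation; summing recovers $\comm{\bar D}{\comm{\hav}{\cdot\,}}=0$.

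The whole argument is mechanical once \eqref{eq:HScommRel} is in hand; the only mild point to watch is that both $S_{\mu\vec n\omega}$ and $S_{\nu\vec n\omega}$ carry the \emph{same} index $\omega$ in every term of $\Delta H$ and $\bar D$, which is precisely what makes the $\pm\omega$ cancellations work and is in turn the structural reason the minimal dissipator is covariant under the free quasi-unitary evolution generated by $\hav$.
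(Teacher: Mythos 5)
Your proof is correct and follows exactly the route the paper indicates: the paper offers no written proof beyond "one checks after straightforward algebra" using the eigenoperator relations \eqref{eq:HScommRel}, and your termwise Leibniz-rule computation (plus the Jacobi identity for the Hamiltonian part and the reduction of $K$ to $-i\comm{\Delta H}{\cdot\,}+\bar{D}$ via the second form \eqref{eq:KlambdaLocalAver2}) is precisely that algebra, carried out correctly. Nothing further is needed.
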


Second part of the above statement, namely commutativity of maps $K$ and $\comm{\hav}{\cdot\,}$, is technically analogous to the so-called \emph{covariance property} in time-independent case \cite{Alicki2006a,Alicki2006b,Szczygielski2013,Rivas2012}. We are now ready to formulate the crucial \emph{convergence theorem}, which validates the weak coupling limit procedure in our case. The subsequent formulation is heavily inspired by similar result of Davies (see Theorem 2.1 in \cite{Davies1974}), however we will modify original statements, as well as proof methods, in order to make it compatible with the quasiperiodic setting.

\begin{theorem}\label{thm:ConvergenceofNets}
Let us define two integral operators on $\mathscr{V}$ by their action on any function $\varphi \in \mathscr{V}$,
\begin{subequations}
	\begin{align}
		\tilde{\mathcal{H}}_\lambda (\varphi)(t) &= \int\limits_{0}^{t} U_{\lambda^{-2}s}^{-1} \tilde{K}_{s}^{\lambda} U_{\lambda^{-2}s}(\varphi(s)) \, ds, \\
		\mathcal{K}(\varphi)(t) &= \int\limits_{0}^{t} K(\varphi(s)) \, ds,\label{eq:Koperatordefinition}
	\end{align}
\end{subequations}
where $\tilde{K}_{t}^{\lambda}$ is defined by \eqref{eq:KtildesDefinition} and $K$ is its local average \eqref{eq:KlambdaLocalAver}. Then, if all auto-correlation functions $f_{\mu\nu}$ satisfy (Davies') integrability condition
\begin{equation}\label{eq:EpsilonIntegrability}
	\int\limits_{0}^{\infty} |f_{\mu\nu}(x)|(1+x)^\epsilon dx < \infty
\end{equation}
for some $\epsilon > 0$ and if assumptions \ref{assm:RatInd}--\ref{assm:OmegaCF} are met, then both nets of operators $(\mathcal{H}_\lambda)_\lambda$ and $(\tilde{\mathcal{H}}_\lambda)_\lambda$ converge to operator $\mathcal{K}$ in strong operator topology in $\mathscr{B}(\mathscr{V})$, as $\lambda \to 0^+$.
\end{theorem}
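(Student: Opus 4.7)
The plan is to decompose
\[
\mathcal{H}_\lambda - \mathcal{K} = (\mathcal{H}_\lambda - \tilde{\mathcal{H}}_\lambda) + (\tilde{\mathcal{H}}_\lambda - \mathcal{K})
\]
and show each summand converges to zero in the strong operator topology on $\mathscr{B}(\mathscr{V})$ as $\lambda \to 0^+$. The first difference is a tail estimate on the memory kernel, controlled by the Davies condition \eqref{eq:EpsilonIntegrability}; the second is a Riemann--Lebesgue-type statement saying that, after conjugation by $U_{\lambda^{-2}s}^{\pm 1}$ produces oscillations $e^{i\lambda^{-2}(\omega_{\vec{n}} - \omega^{\prime}_{\vec{m}})s}$ in the integrand, only the resonant (diagonal) contributions survive integration against a slowly varying test function $\varphi$, and these resonant contributions are exactly $K$.

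\textbf{Truncation step.} Because $\mathcal{H}_\lambda$ and $\tilde{\mathcal{H}}_\lambda$ differ only in that the inner $x$-integration is cut at $\lambda^{-2}(t-s)$ instead of extended to $+\infty$, the integrand difference is the tail $\int_{\lambda^{-2}(t-s)}^{\infty}(\cdots)\,dx$, which by Lemma \ref{lemma:L1} is pointwise majorized by $4\sum_{\mu\nu}\|S_{\mu}\|\|S_{\nu}\||f_{\mu\nu}(x)|$. The Davies condition yields
\[
\int_{a}^{\infty}|f_{\mu\nu}(x)|\,dx \leq (1+a)^{-\epsilon}\int_{0}^{\infty}(1+x)^{\epsilon}|f_{\mu\nu}(x)|\,dx ,
\]
so substituting $a = \lambda^{-2}(t-s)$ and integrating $s\in[0,t_*]$ delivers an operator-norm estimate of the form $\|\mathcal{H}_\lambda - \tilde{\mathcal{H}}_\lambda\|_{\mathscr{B}(\mathscr{V})} = O(\lambda^{2\min(1,\epsilon)})$ (with a logarithmic modification at $\epsilon=1$), which is even norm convergence, hence strong convergence.

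\textbf{Spectral expansion.} For $\tilde{\mathcal{H}}_\lambda - \mathcal{K}$, I would reuse the computation already carried out at \eqref{eq:IntegrandNonexplicit}--\eqref{eq:IntegralExpLambda}: combining Proposition \ref{prop:Ut0Reducible}, Lemma \ref{lemma:PexptZdecomposition}, and the uniformly convergent Fourier expansion \eqref{eq:AtDefExp} of $A(t) = P_t^{-1} A P_t$, the conjugated kernel admits the representation
\[
U_{\lambda^{-2}s}^{-1}\,\tilde{K}_{s}^{\lambda}\,U_{\lambda^{-2}s} = \sum_{\omega,\omega'}\sum_{\vec{n},\vec{m}\in\integers^r} e^{i\lambda^{-2}(\omega_{\vec{n}} - \omega^{\prime}_{\vec{m}})s}\,G_{\omega\omega'\vec{n}\vec{m}},
\]
with $\lambda$-independent coefficients $G_{\omega\omega'\vec{n}\vec{m}}$. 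By assumptions \ref{assm:RatInd} and \ref{assm:OmegaCF}, the coincidence $\omega_{\vec{n}} = \omega^{\prime}_{\vec{m}}$ is equivalent to $\omega = \omega'$ and $\vec{n} = \vec{m}$, so the diagonal sub-sum reproduces $K$ exactly as in \eqref{eq:KlambdaLocalAver}. Therefore $\tilde{\mathcal{H}}_\lambda \varphi(t) - \mathcal{K}\varphi(t)$ is an infinite sum of oscillatory integrals $G_{\omega\omega'\vec{n}\vec{m}}\int_{0}^{t}e^{i\lambda^{-2}\eta s}\varphi(s)\,ds$ with $\eta := \omega_{\vec{n}} - \omega^{\prime}_{\vec{m}} \neq 0$.

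\textbf{Riemann--Lebesgue and the main obstacle.} For each fixed nonzero $\eta$, approximating $\varphi$ in supremum norm by a step function and integrating by parts shows that $\sup_{t\in[0,t_*]}|\int_{0}^{t}e^{i\lambda^{-2}\eta s}\varphi(s)\,ds| \to 0$ as $\lambda \to 0^+$, so the single-term limit is routine. The main obstacle is uniformity over the infinite index set: since $\eta = (\omega - \omega') + (\vec{n}-\vec{m})\cdot\vec{\Omega}$ and $\vec{\Omega}$ is rationally independent, the values $|\eta|$ accumulate at zero, so no uniform lower bound is available. I would circumvent this by dominated convergence: the $\lambda$-uniform per-term bound $\bigl\|G_{\omega\omega'\vec{n}\vec{m}}\int_0^t e^{i\lambda^{-2}\eta s}\varphi(s)\,ds\bigr\| \leq t_*\|\varphi\|_{\mathscr{V}}\|G_{\omega\omega'\vec{n}\vec{m}}\|$ together with the faster-than-polynomial decay of the Fourier coefficients $\hat{P}_{\vec{n}}$, $\hat{R}_{\vec{n}}$ (guaranteed by $H_{\mathrm{f.}} \in \mathscr{C}^\infty$) and the boundedness of $h_{\mu\nu}$, $\zeta_{\mu\nu}$ (from $f_{\mu\nu} \in L^1$) furnishes an absolutely summable $\lambda$-independent majorant. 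Dominated convergence then transfers the termwise Riemann--Lebesgue limit to the sum, giving $\|\tilde{\mathcal{H}}_\lambda \varphi - \mathcal{K}\varphi\|_{\mathscr{V}} \to 0$ for every $\varphi \in \mathscr{V}$, and combined with the truncation step this yields $\mathcal{H}_\lambda \to \mathcal{K}$ in the strong operator topology.
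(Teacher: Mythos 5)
Your overall decomposition $\mathcal{H}_\lambda-\mathcal{K}=(\mathcal{H}_\lambda-\tilde{\mathcal{H}}_\lambda)+(\tilde{\mathcal{H}}_\lambda-\mathcal{K})$ is exactly the paper's, and your treatment of the second difference (spectral expansion into oscillatory integrals with phases $e^{i\lambda^{-2}(\omega_{\vec{n}}-\omega'_{\vec{m}})s}$, termwise Riemann--Lebesgue, and a $\lambda$-independent summable majorant coming from smoothness of the Fourier coefficients) matches the paper's second step and is, if anything, more explicit about why the accumulation of $|\omega_{\vec{n}}-\omega'_{\vec{m}}|$ at zero is harmless.

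There is, however, a genuine gap in your truncation step. Comparing \eqref{eq:KlambdaKernel} with \eqref{eq:KtildesDefinition}, the kernels $K^{\lambda}_{t-s,s}$ and $\tilde{K}^{\lambda}_{s}$ differ in \emph{two} ways, not one: besides the upper limit of the $x$-integral ($\lambda^{-2}(t-s)$ versus $\infty$), the kernel $K^{\lambda}_{t,s}$ contains the \emph{perturbed} propagator $U^{\lambda}_{x+\lambda^{-2}s,\lambda^{-2}s}$ in the middle slot, whereas $\tilde{K}^{\lambda}_{t}$ contains the free propagator $U_{x+\lambda^{-2}t,\lambda^{-2}t}$. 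Your argument only controls the tail $\int_{\lambda^{-2}(t-s)}^{\infty}(\cdots)\,dx$, and for that piece alone plain $L^1$-integrability of the $f_{\mu\nu}$ would already suffice (this is how the paper disposes of it, via Lemma \ref{lemma:L1} and dominated convergence). The part you have omitted is the replacement $U^{\lambda}\to U$ inside the kernel, which the paper handles by expanding $U^{\lambda}_{t,s}=U_{t,s}+\sum_{n\geqslant 1}\lambda^{n}b^{\lambda}_{n}(t,s)$ as a Dyson-type series and showing that the resulting correction $\sum_{n\geqslant 1}\lambda^{n}a^{\lambda}_{n}(t,s)$ tends to zero term by term. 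This is precisely where the Davies weight $(1+x)^{\epsilon}$ in \eqref{eq:EpsilonIntegrability} is actually needed: the iterated integrals defining $b^{\lambda}_{n}$ grow polynomially in $x$ over the rescaled time window, and one must trade powers of $\lambda$ against powers of $x$ under the integral $\int_{0}^{\infty}|f_{\mu\nu}(x)|(\cdots)\,dx$. As written, your proof never uses the hypothesis \eqref{eq:EpsilonIntegrability} for anything that genuinely requires it, which is a sign that a substantive step is missing; the claimed rate $O(\lambda^{2\min(1,\epsilon)})$ applies only to the tail and not to $\|\mathcal{H}_\lambda-\tilde{\mathcal{H}}_\lambda\|_{\mathscr{B}(\mathscr{V})}$ as a whole.
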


\begin{proof}
First, we show that $(\mathcal{H}_\lambda - \tilde{\mathcal{H}}_\lambda)_\lambda \to 0$ strongly in $\mathscr{B}(\mathscr{V})$. By recursive substitutions in \eqref{eq:Ulambda}, one can express propagator $U_{t,s}^{\lambda}$ as uniformly convergent series
\begin{equation}
	U_{t,s}^{\lambda} = U_{t,s} + \sum_{n=1}^{\infty} \lambda^n b_{n}^{\lambda}(t,s),
\end{equation}
where operators $b_{n}^{\lambda}(t,s)$ are defined by
\begin{equation}
	b_{n}^{\lambda}(t,s) = \int\limits_{s}^{t} dt_1 \int\limits_{s}^{t_1} dt_2 \, ... \int\limits_{s}^{t_{n-1}} U_t A_{11}(t_1)A_{11}(t_2) ... A_{11}(t_n)U_{s}^{-1} \, dt_n,
\end{equation}
and $A_{11}(t_k) = U_{t_k}^{-1}A_{11}U_{t_k}$ for each variable $t_k$, $k \in \{1,\, ... \, , \, n\}$. This allows to re-express $K_{t,s}^{\lambda}$ and find a rough estimation on $\| K_{t,s}^{\lambda}-\tilde{K}^{\lambda}_{s} \|$ as
\begin{align}\label{eq:KtsminusKs}
	\| K_{t,s}^{\lambda}-\tilde{K}^{\lambda}_{s}\| &\leqslant \int\limits_{\lambda^{-2}t}^{\infty} \left\| (U_{x+\lambda^{-2}s,\lambda^{-2}s})^{-1}A_{01}U_{x+\lambda^{-2}s,\lambda^{-2}s}A_{10} \right\| \, dx \\
	&+ \sum_{n=1}^{\infty} \lambda^n a_{n}^{\lambda}(t,s), \nonumber
\end{align}
where $a_{n}^{\lambda}(t,s) = \int_{0}^{\infty} \|(U_{x+\lambda^{-2}s,\lambda^{-2}s})^{-1} A_{01} b_{n}^{\lambda}(x+\lambda^{-2}s,\lambda^{-2}s) A_{10} \| dx$. The integral in \eqref{eq:KtsminusKs} vanishes as $\lambda\to 0^+$ by Lemma \ref{lemma:L1}. Moreover, one shows that the remaining series converges to 0 term by term by integrability condition \eqref{eq:EpsilonIntegrability} (by approach similar to \cite{Davies1974,Davies1978}) and so, $K_{t,s}^{\lambda}-\tilde{K}^{\lambda}_{s} \to 0$ uniformly for all $t,s$. This yields, for any $\varphi \in \mathscr{V}$,
\begin{equation}
	\|(\mathcal{H}_\lambda - \tilde{\mathcal{H}}_\lambda)(\varphi)\|_{\mathscr{V}}\leqslant \int\limits_{0}^{t_*} \| K_{t-s,s}^{\lambda} - \tilde{K}_{s}^{\lambda} \| \| \varphi(s) \| \, ds
\end{equation}
which then converges to 0 as $\lambda\to 0^+$ by dominated convergence theorem; thus, $(\mathcal{H}_\lambda - \tilde{\mathcal{H}}_\lambda)\to 0$ in strong operator topology over $\mathscr{V}$. The second part of a proof demonstrates that also $\tilde{\mathcal{H}}_\lambda \to \mathcal{K}$ strongly. Operator $\tilde{K}^{\lambda}_{t}$ may be put, after using \eqref{eq:U0t} and \eqref{eq:PexptZdecomposition} as
\begin{align}
	\tilde{K}_{t}^{\lambda} &= \sum_{\omega,\omega'} \int\limits_{0}^{\infty} e^{i\omega x} e^{i(\omega-\omega')\lambda^{-2}s} A(x+\lambda^{-2}s) e^{x\bar{Z}} A(\lambda^{-2}s) Q_{\omega'} \, dx \\
	&=\sum_{\omega,\omega'}\sum_{\vec{n},\vec{m}\in\integers^r} e^{i\lambda^{-2}(\omega_{\vec{n}}-\omega^{\prime}_{\vec{m}})s}e^{i\omega_{\vec{n}} x} Q_\omega \hat{R}_{\vec{n}} e^{x\bar{Z}} \hat{R}_{-\vec{m}} Q_{\omega'} \, dx, \nonumber
\end{align}
where functions $A(t)$ are as before given by $A(t) = P_{t}^{-1} A P_t$ and admit Fourier series expansions $A(t) = \sum_{\vec{n}\in\integers^r} \hat{R}_{\vec{n}}e^{i\vec{n}\cdot\vec{\Omega}t}$. Then, we use the \emph{first form} \eqref{eq:KlambdaLocalAver} of $K$ to write, for any $\varphi\in\mathscr{V}$,
\begin{align}
	\| (\tilde{\mathcal{H}}_{\lambda}& - \mathcal{K})(\varphi)(t)\| =\\
	&= \Bigg\| \sum_{\omega,\omega'}\sum_{\vec{n},\vec{m}\in\integers^r} \int\limits_{0}^{\infty} e^{i\omega_{\vec{n}}x} Q_\omega \hat{R}_{\vec{n}} e^{x\bar{Z}} \hat{R}_{-\vec{m}} Q_{\omega'} (\phi_\lambda (t)) \, dx \Bigg\|,\nonumber
\end{align}
where we introduced function $\phi_\lambda\in\mathscr{V}$ by setting
\begin{equation}\label{eq:Phit}
	\phi_\lambda(t) = \int\limits_{0}^{t} \left[e^{i\lambda^{-2}(\omega_{\vec{n}}-\omega^{\prime}_{\vec{m}})s} - \delta_{\omega\omega'}\delta_{\vec{n}\vec{m}}\right] \varphi(s) \, ds .
\end{equation}
Dominated convergence theorem combined with Riemann-Lebesgue lemma then shows $\|\phi_\lambda \|_\mathscr{V} \to 0$; this finally proves $\mathcal{H}_\lambda \to \mathcal{K}$ strongly and both nets $(\mathcal{H}_\lambda)$, $(\tilde{\mathcal{H}}_\lambda)$ converge to operator $\mathcal{K}$ in strong operator topology, as $\lambda\to 0^+$.
\end{proof}

\subsubsection{CP-divisible dynamics}

The above theorem allows us to obtain an approximated form of solution of the Nakajima-Zwanzig equation in question:

\begin{theorem}
\label{thm:TheMainResult}
Let $W_{t}^{\lambda} = U_{t}\tilde{W}_{t}^{\lambda}$ and define epimorphism $W_t : \mathcal{B}\to\mathcal{B}_0$, continuously depending on $t\in [0,t_*]$, by
\begin{equation}\label{eq:WtDefinition}
	W_{t} = \proj{0}U_{t} + \int\limits_{0}^{t} U_{t} K U_{s}^{-1} W_{s} \, ds,
\end{equation}
where $K$ is again the local average of function $t\mapsto \tilde{K}_{t}^{\lambda}$ and $U_t = P_t e^{t\bar{Z}}$. Let $v_0 \in\mathcal{B}$ be arbitrary and set functions $v,v_\lambda \in \mathscr{V}$
\begin{equation}
	v_\lambda (t) = W_{t}^{\lambda}(v_0), \quad v(t) = W_{t}(v_0), \quad t\in [0,t_*].
\end{equation}
Then, if assumptions \ref{assm:RatInd}--\ref{assm:OmegaCF} hold and if all auto-correlation functions $f_{\mu\nu}$ satisfy condition \eqref{eq:EpsilonIntegrability} for some $\epsilon > 0$, then $v_\lambda\to v$ uniformly over $[0,t_*]$ for any $v_0 \in \mathcal{B}$. Equivalently, $W_t$ represents a reduced evolution of system S in Schroedinger picture, i.e.~a solution of integral Nakajima-Zwanzig equation, suitable in regime of weak coupling limit $(\lambda\to 0^+)$.
\end{theorem}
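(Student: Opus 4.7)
My plan is to reduce the claim to the strong convergence $\mathcal{H}_\lambda \to \mathcal{K}$ already established in Theorem \ref{thm:ConvergenceofNets}, via a Volterra--Gronwall argument performed in the interaction picture. First, I would pass to the interaction picture: since $\proj{0}$ commutes with $U_t$ and conjugation by $U_t$ is an isometry on $\mathcal{B}$ (so $\| U_t \|_{\mathscr{B}(\mathcal{B})} = 1$ uniformly in $t$), setting $\tilde{v}(t) = U_t^{-1} v(t)$ in \eqref{eq:WtDefinition} yields the fixed-point equation $\tilde{v} = \proj{0}(v_0) + \mathcal{K}(\tilde{v})$ in $\mathscr{V}$, while $\tilde{v}_\lambda(t) := U_t^{-1} v_\lambda(t) = \tilde{W}_t^\lambda(v_0)$ satisfies $\tilde{v}_\lambda = \proj{0}(v_0) + \mathcal{H}_\lambda(\tilde{v}_\lambda)$ by \eqref{eq:Wtilde}. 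Uniform convergence $\tilde{v}_\lambda \to \tilde{v}$ on $[0,t_*]$ would then immediately imply the same for $v_\lambda \to v$.

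Subtracting the two integral equations yields the error identity
\begin{equation*}
(\id{\mathscr{V}} - \mathcal{H}_\lambda)(\tilde{v}_\lambda - \tilde{v}) = (\mathcal{H}_\lambda - \mathcal{K})(\tilde{v}),
\end{equation*}
whose right-hand side tends to $0$ in $\mathscr{V}$ as $\lambda \to 0^+$ by Theorem \ref{thm:ConvergenceofNets} applied to the fixed element $\tilde{v} \in \mathscr{V}$. It then remains to invert $\id{\mathscr{V}} - \mathcal{H}_\lambda$ with operator-norm bound uniform in $\lambda$. Reusing the estimate from the proof of Lemma \ref{lemma:L1}, one has $\| K_{t-s,s}^\lambda \|_{\mathscr{B}(\mathcal{B})} \leq C$ with $C = 4 \sum_{\mu\nu} \| f_{\mu\nu} \|_1 \| S_\mu \| \| S_\nu \|$, uniformly in $\lambda, t, s$; combined with $\| U_{\lambda^{-2}s}^{\pm 1} \| = 1$, this gives the Volterra-type bound $\| \mathcal{H}_\lambda(\varphi)(t) \| \leq C \int_0^t \| \varphi(s) \| \, ds$. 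Iterating yields $\| \mathcal{H}_\lambda^n \|_{\mathscr{B}(\mathscr{V})} \leq (C t_*)^n / n!$, so the Neumann series $\sum_{n\geqslant 0} \mathcal{H}_\lambda^n$ converges in operator norm to $(\id{\mathscr{V}} - \mathcal{H}_\lambda)^{-1}$, with $\| (\id{\mathscr{V}} - \mathcal{H}_\lambda)^{-1} \| \leq e^{C t_*}$, uniformly in $\lambda$. Combining, $\| \tilde{v}_\lambda - \tilde{v} \|_\mathscr{V} \leq e^{C t_*} \| (\mathcal{H}_\lambda - \mathcal{K})(\tilde{v}) \|_\mathscr{V} \to 0$, which is the desired conclusion.

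The main obstacle is the uniform-in-$\lambda$ Volterra bound underpinning the inversion of $\id{\mathscr{V}} - \mathcal{H}_\lambda$. The subtlety is that the upper limit of integration defining $K_{t-s,s}^\lambda$ in \eqref{eq:KlambdaKernel} is $\lambda^{-2}(t-s)$, which diverges as $\lambda \to 0^+$; this is precisely where the $L^1$-integrability of the auto-correlation functions $f_{\mu\nu}$ (already assumed via Davies' condition \eqref{eq:EpsilonIntegrability}) becomes indispensable, as it allows replacement of the diverging upper limit by $+\infty$ while retaining a $\lambda$-independent bound. Everything else -- existence and uniqueness of the fixed-point $\tilde{v}$, the Neumann-series inversion, and transferring convergence back from the interaction picture via $\| U_t \| = 1$ -- is a routine verification.
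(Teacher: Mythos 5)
Your proposal is correct and follows essentially the same route as the paper: both pass between the Schr\"odinger and interaction pictures via the isometry induced by $U_t$, reduce the claim to the strong convergence $\mathcal{H}_\lambda \to \mathcal{K}$ of Theorem \ref{thm:ConvergenceofNets}, and control everything through the Volterra bound $\| \mathcal{H}_\lambda^n \| \leqslant (C t_*)^n / n!$. The only (cosmetic) difference is that the paper compares the two Neumann series $\sum_n (\mathcal{E}\mathcal{H}_\lambda\mathcal{E}^{-1})^n(a)$ and $\sum_n (\mathcal{E}\mathcal{K}\mathcal{E}^{-1})^n(a)$ term by term, whereas you package the same estimate as a uniformly bounded inverse $(\mathrm{id}_{\mathscr{V}} - \mathcal{H}_\lambda)^{-1}$ applied to $(\mathcal{H}_\lambda - \mathcal{K})(\tilde{v})$; note only that your uniform bound on $\| K_{t-s,s}^{\lambda} \|$ must also absorb the perturbed propagator $U^{\lambda}$ appearing in \eqref{eq:KlambdaKernel}, which is where Davies' condition \eqref{eq:EpsilonIntegrability} enters, exactly as you observe.
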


\begin{proof}
We use methods similar to \cite[Theorem 2.1]{Davies1974} and \cite[Proposition 6]{Szczygielski2020}. Let $\mathcal{E}$ be an isometry on $\mathscr{V}$ given by equality $\mathcal{E}(\varphi)(t) = U_t (\varphi (t))$, representing a transformation from the interaction picture to the Schroedinger picture. Then, one checks that functions $v_\lambda$ and $v$ satisfy equations
\begin{equation}
	v_\lambda = a + \mathcal{E}\mathcal{H}_{\lambda}\mathcal{E}^{-1}(v_\lambda ), \quad v = a + \mathcal{E}\mathcal{K}\mathcal{E}^{-1}(v),
\end{equation}
for $a=\mathcal{E}\proj{0}(v_0)$, which comes from \eqref{eq:Wtilde}, \eqref{eq:HlambdaOperatorDef} and \eqref{eq:Koperatordefinition}. By recursive substitutions, this yields
\begin{equation}
	v_\lambda = \sum_{n=0}^{\infty} (\mathcal{E}\mathcal{H}_{\lambda}\mathcal{E}^{-1})^n (a) , \quad v = \sum_{n=0}^{\infty} (\mathcal{E}\mathcal{K}\mathcal{E}^{-1})^n (a).
\end{equation}
These in turn lead to the estimation
\begin{equation}
	\| v - v_\lambda \|_\mathscr{V} \leqslant \sum_{n=0}^{\infty} \| \mathcal{K}^{n}(a) - \mathcal{H}_{\lambda}^{n}(a) \|_\mathscr{V}
\end{equation}
which converges for all $a\in\mathscr{V}$, $\lambda\in\reals_+$ by the fact, that both $\mathcal{H}_\lambda$, $\mathcal{K}$ are Volterra integral operators and thus, norms of $\mathcal{K}^{n}$, $\mathcal{H}_{\lambda}^{n}$ are dominated by $(C t_{*})^{n}/n!$ for some constant $C>0$ (see \cite{Davies1974,Szczygielski2020} for more details). Each term of this series then converges to 0 as $\lambda\to 0^+$ by pointwise (strong) convergence of $\mathcal{H}_\lambda$ to $\mathcal{K}$ proved in Theorem \ref{thm:ConvergenceofNets}, and so, $v_\lambda\to v$ for all $v_0 \in \mathcal{B}$ uniformly in $\mathscr{V}$ over $[0,t_*]$ for every $t_* > 0$.
\end{proof}

We conclude this section by noting, that epimorphism $W_t$ defined in preceding Theorem \ref{thm:TheMainResult} gives rise to a \emph{legitimate} Markovian quantum dynamics on $\matrd$, i.e.~such a family $\{\Lambda_t : t\in\reals_+\}$ of maps on $\matrd$, which are \emph{trace preserving} and \emph{CP-divisible}. CP-divisibility of $\Lambda_t$ in this context means, that $\Lambda_t$ is infinitely divisible, i.e.~that for all $t\in\reals_+$ and all $s\in[0,t]$ there exists a propagator $\Lambda_{t,s}$ associated with the dynamics, such that $\Lambda_t = \Lambda_{t,s}\Lambda_s$, and that $\Lambda_{t,s}$ is a completely positive map on $\matrd$. In theory of open systems, CP-divisibility is generally accepted as a definition of Markovianity \cite{Chruscinski2014a}. Trace preservation condition is naturally equivalent to preservation of trace norm in a positive cone in $\matrd$.

\begin{theorem}\label{thm:LambdaCP}
Function $t \mapsto \Lambda_t$, $t\in\reals_+$ defined by equality
\begin{equation}
	W_t (v_0) = \Lambda_t (\rho_0)\otimes\envdens, \quad \rho_0 \otimes \envdens = \proj{0}(v_0),
\end{equation}
is uniformly continuous and differentiable, and map $\Lambda_t$ is CP-divisible and trace preserving on $\matrd$ for each $t\in\reals_+$, i.e.~a quantum dynamical map. Moreover, function $t\mapsto \rho_t = \Lambda_t (\rho_0)$ then satisfies Lyapunov-Perron reducible, Markovian Master Equation
\begin{equation}\label{eq:MMEODE}
	\dot{\rho}_t = L_t (\rho_t)
\end{equation}
for $L_t$ being a quasiperiodic Lindbladian in standard (Lindblad) form.
\end{theorem}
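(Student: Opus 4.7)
The plan is to solve the integral equation \eqref{eq:WtDefinition} in closed form. Setting $\tilde{W}_t := U_t^{-1} W_t$ and using $\comm{\proj{0}}{U_t}=0$ converts \eqref{eq:WtDefinition} into $\tilde{W}_t = \proj{0} + \int_0^t K\tilde{W}_s\,ds$, whose unique bounded solution is $\tilde{W}_t = e^{tK}\proj{0}$ (well-defined in $\mathscr{B}(\mathcal{B})$ since $K$ is bounded). Because $K$ maps $\mathcal{B}$ into $\mathcal{B}_0$, the second form \eqref{eq:KlambdaLocalAver2} of $K$ combined with $\comm{\hame}{\envdens}=0$ shows that on $\mathcal{B}_0 = \matrd \otimes \envdens$, $K$ acts as $L \otimes \mathrm{id}$ with $L(\rho) = -i\comm{\Delta H}{\rho} + \bar{D}(\rho)$, while $U_t|_{\mathcal{B}_0}$ acts as $\mathcal{P}_t\, e^{-it\comm{\hav}{\cdot\,}} \otimes \mathrm{id}$, where $\mathcal{P}_t(\rho) := p_t \rho p_t^\hadj$. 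Stripping off the $\envdens$ factor from $W_t(v_0) = U_t e^{tK}\proj{0}(v_0)$ yields the explicit form
\begin{equation*}
\Lambda_t \;=\; \mathcal{P}_t\, e^{-it\comm{\hav}{\cdot\,}}\, e^{tL}.
\end{equation*}

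Next, the covariance identity in Proposition \ref{prop:CovProp} gives $\comm{L}{\comm{\hav}{\cdot\,}} = 0$, so the last two factors commute and collapse into a single semigroup:
\begin{equation*}
\Lambda_t \;=\; \mathcal{P}_t\, e^{tG}, \qquad G \;=\; -i\comm{\hav + \Delta H}{\cdot\,} + \bar{D}.
\end{equation*}
The rate matrix $[h_{\mu\nu}(\omega + \vec{n}\cdot\vec{\Omega})]$ is positive semidefinite by Bochner's theorem (as recorded after \eqref{eq:LambShiftD}), so $G$ is in standard GKSL form and $\{e^{tG}\}_{t\geqslant 0}$ is a CPTP quantum dynamical semigroup; since $\mathcal{P}_t$ is CPTP as unitary conjugation, so is $\Lambda_t$. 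CP-divisibility is then read off from the propagator $\Lambda_{t,s} = \Lambda_t \Lambda_s^{-1} = \mathcal{P}_t\, e^{(t-s)G}\, \mathcal{P}_s^{-1}$, a composition of two unitary conjugations and a CPTP map.

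For the master equation I differentiate $\Lambda_t = \mathcal{P}_t e^{tG}$. Using \eqref{eq:ptDerivative} one computes $\dot{\mathcal{P}}_t = -i\comm{H_t}{\cdot\,}\mathcal{P}_t + i\mathcal{P}_t\comm{\hav}{\cdot\,}$, and substituting $L = G + i\comm{\hav}{\cdot\,}$ yields $\dot{\Lambda}_t = L_t\Lambda_t$ with
\begin{equation*}
L_t \;=\; -i\comm{H_t}{\cdot\,} \;+\; \mathcal{P}_t\, L\, \mathcal{P}_t^{-1}.
\end{equation*}
Unitary conjugation preserves GKSL form: it only replaces each jump operator $S_{\mu\vec{n}\omega}$ by $p_t S_{\mu\vec{n}\omega} p_t^\hadj$ (same rates) and shifts the Hamiltonian by $p_t \Delta H p_t^\hadj$. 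Hence $L_t$ is a Lindbladian in standard form with quasiperiodic coefficients (inherited from $p_t$ and $H_t$), and the factorization $\Lambda_t = \mathcal{P}_t e^{tG}$ is precisely the Lyapunov-Perron normal form \eqref{eq:solutionFloquetForm} for \eqref{eq:MMEODE}, establishing reducibility. Uniform continuity and differentiability of $t \mapsto \Lambda_t$ on $\reals_+$ follow immediately from $\dot{\Lambda}_t = L_t \Lambda_t$, since $L_t$ is uniformly bounded in $t$ (being quasiperiodic and continuous) and $\|\Lambda_t\| \leqslant 1$ in trace norm.

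The technically delicate step is the first one: untangling how $K$ and $U_t$, originally defined on $\mathcal{B}$, descend through the tensor factor with $\envdens$ onto $\matrd$ so as to extract the scalar Lindbladian $L$ and the unitary conjugation $\mathcal{P}_t$. Once the closed form $\Lambda_t = \mathcal{P}_t e^{tG}$ is in hand, the remaining claims --- CPTP-ness, CP-divisibility, standard GKSL structure of $L_t$, LP-reducibility, and regularity --- are essentially formal consequences of Proposition \ref{prop:CovProp} and the Bochner positivity of $[h_{\mu\nu}]$.
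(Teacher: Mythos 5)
Your proposal is correct, and it reaches the same explicit formula $\Lambda_t=\Sigma_t e^{tX}$ (your $\mathcal{P}_t e^{tG}$) as the paper, but it gets there in the reverse logical order. The paper first differentiates the integral equation \eqref{eq:WtDefinition} to obtain the operator ODE $\dot{W}_t=(Z_t+P_tKP_t^{-1})W_t$, reads off the time-dependent generator $L_t=-i\comm{H_t+\Sigma_t(\Delta H)}{\cdot\,}+\Sigma_t\bar{D}\Sigma_t^{-1}$, deduces CP-divisibility from $L_t$ being a standard Lindbladian, and only afterwards \emph{verifies} the product form by differentiating the proposed $\Sigma_te^{tX}$. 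You instead solve the integral equation in closed form ($\tilde{W}_t=e^{tK}\proj{0}$, legitimate by boundedness of $K$ and uniqueness for Volterra equations), descend to $\matrd$ to get $\Lambda_t=\mathcal{P}_te^{-it\comm{\hav}{\cdot\,}}e^{tL}$, and use the covariance property of Proposition \ref{prop:CovProp} to merge the two commuting exponentials into $e^{tG}$ --- so the Lyapunov--Perron factorization, and with it reducibility, appears \emph{constructively} rather than by guess-and-check. A second genuine difference: you obtain complete positivity of the propagator directly from the product structure $\Lambda_{t,s}=\mathcal{P}_te^{(t-s)G}\mathcal{P}_s^{-1}$ (a GKSL semigroup sandwiched between unitary conjugations), whereas the paper routes this through the standard form of $L_t$; the paper only makes your composition argument in the closing remark after the proof. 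Both arguments rest on exactly the same two inputs --- the covariance property and Bochner positivity of $[h_{\mu\nu}(\omega+\vec{n}\cdot\vec{\Omega})]$ --- and your derivation of $L_t$ by differentiating the product form coincides with the paper's final verification step, so nothing is missing; your ordering is arguably cleaner in that the ``delicate step'' you flag (descending $K$ and $U_t$ through the tensor factor with $\envdens$) is the same restriction-to-$\mathcal{B}_0$ computation the paper performs when it checks invariance of $\mathcal{B}_0$ under $Z_t+P_tKP_t^{-1}$.
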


\begin{proof}
In this proof, we will use the \emph{second form} \eqref{eq:KlambdaLocalAver2} of $K$. Uniform continuity and differentiability of $W_t$ are clear from \eqref{eq:WtDefinition}. By Proposition \ref{prop:Ut0Reducible} and covariance property (Proposition \ref{prop:CovProp}), direct differentiation of $W_t$ gives
\begin{equation}
	\dot{W}_t = \left( Z_t + P_t K P_{t}^{-1}\right) W_t .
\end{equation}
Denote $\rho_t = \Lambda_t (\rho_0)$ such that $W_t (v_0) = \rho_t \otimes \envdens$ and introduce quasiperiodic function $t\mapsto \Sigma_t \in \mathscr{B}(\matrd)$, acting on $\rho\in\matrd$ by
\begin{equation}\label{eq:Sigma}
	\Sigma_t (\rho ) = p_t \, \rho \, p_{t}^{\hadj}, \quad \rho \in \matrd,
\end{equation}
being clearly an isometry, such that $P_t (\rho\otimes\envdens) = \Sigma_t (\rho) \otimes \envdens$ by local structure of $P_t$ \eqref{eq:PTetZdefinitions}. Immediately, from formula \eqref{eq:KlambdaLocalAver2} we see that $\mathcal{B}_0$ is an invariant subspace of $Z_t + P_t K P_{t}^{-1}$ and therefore, for any $v_0 \in\mathcal{B}$, we have
\begin{align}
	\dot{W}_t (v_0) &= (Z_t + P_t K P_{t}^{-1}) W_t (v_0) = (Z_t + P_t K P_{t}^{-1}) (\rho_t \otimes\envdens) \\
	&= \left( -i\comm{H_{t}}{\rho_t} + \Sigma_t (-i \comm{\Delta H}{\Sigma_{t}^{-1}(\rho_t)} + \bar{D}\Sigma_{t}^{-1}(\rho_t)) \right) \otimes \envdens \nonumber \\
	&= \left( -i\comm{H_t + \Sigma_t (\Delta H)}{\rho_t} + \Sigma_t \bar{D} \Sigma_{t}^{-1}(\rho_t)\right) \otimes\envdens . \nonumber
\end{align}
This means that the one-parameter map $\Lambda_t$ on $\matrd$ constitutes for a principal fundamental solution subject to operator ODE
\begin{equation}\label{eq:LambdaODE}
	\dot{\Lambda}_t = L_t \Lambda_t , \quad L_t = -i\comm{H_t + \Sigma_t(\Delta H)}{\cdot\,} + \Sigma_t \bar{D} \Sigma_{t}^{-1},
\end{equation}
where $\Sigma_t(\Delta H)$ is Hermitian and $\Sigma_t \bar{D} \Sigma_{t}^{-1}$ may be easily seen to be
\begin{align}
	\sum_{\mu,\nu}\sum_{\vec{n}\in\integers^r}\sum_{\omega} h_{\mu\nu}(\omega_\vec{n}) \left( S_{\nu\vec{n}\omega}^{t}\rho \left(S_{\mu\vec{n}\omega}^{t}\right)^{\hadj} - \frac{1}{2}\acomm{\left(S_{\mu\vec{n}\omega}^{t}\right)^{\hadj}S_{\nu\vec{n}\omega}^{t}}{\rho}\right),
\end{align}
for operators $S_{\nu\vec{n}\omega}^{t} = \Sigma_t (S_{\mu\vec{n}\omega})$. Now, by positive semi-definiteness of matrix $[h_{\mu\nu}(\omega_\vec{n})]_{\mu\nu}$, map $L_t$ is a bounded, quasiperiodic \emph{Lindbladian} in celebrated \emph{standard} (Lindblad-Gorini-Kossakowski-Sudarshan) \emph{form} \cite{Lindblad1976,Gorini1976} and therefore $\Lambda_t$ is a uniformly continuous, CP-divisible and trace preserving map on $\matrd$, i.e.~a \emph{legitimate} quantum dynamics in Schroedinger picture. Its explicit form is then given by expression
\begin{equation}\label{eq:LambdaExplicit}
	\Lambda_t = \Sigma_t e^{tX}, \quad X = -i\comm{\hav + \Delta H}{\cdot\,} + \bar{D},
\end{equation}
as we next check by direct computation. Derivative of map $\Sigma_t$ is, by \eqref{eq:ptDerivative},
\begin{equation}
	\dot{\Sigma}_t = -i \comm{H_t}{\cdot \,} \Sigma_t + i \Sigma_t \comm{\hav}{\cdot\,}
\end{equation}
which yields, with application of covariance property,
\begin{align}
	\dot{\Lambda}_t &= \dot{\Sigma}_t e^{tX} + \Sigma_t X e^{tX} \\
	&= \left( -i \comm{H_t}{\cdot \,} \Sigma_t + i \Sigma_t \comm{\hav}{\cdot\,} \right) e^{tX} + \Sigma_t \left( -i\comm{\hav + \Delta H}{\cdot\,} + \bar{D}\right) e^{tX} \nonumber \\
	&= \left( -i \comm{H_t + \Sigma_t (\Delta H)}{\cdot \,} + \Sigma_t \bar{D} \Sigma_{t}^{-1}\right) \Sigma_t e^{tX} \nonumber \\
	&= L_t \Lambda_t , \nonumber
\end{align}
so \eqref{eq:LambdaExplicit} satisfies ODE \eqref{eq:LambdaODE} and is therefore unique. The product form of $\Lambda_t$ then proves, that the Markovian Master Equation $\dot{\rho}_t = L_t (\rho_t)$, or equivalently its operator version \eqref{eq:LambdaODE}, is indeed reducible. This concludes the proof and the whole construction.
\end{proof}

We close this elaboration with a following remark. Notice, that map $X$ given by \eqref{eq:LambdaExplicit} is explicitly of standard (Lindblad) form, since $\bar{D}$ clearly is and so family $\{e^{tX} : t\in\reals_+\}$ is a completely positive $\mathcal{C}_0$-semigroup of trace norm contractions on $\matrd$, i.e.~a celebrated \emph{quantum dynamical semigroup}. Isometry $\Sigma_t$ is trivially completely positive and trace preserving, and so is map $\Lambda_t$ as a composition. 

\subsection{Stability of solutions and steady state}
\label{sec:Stability}
Here we make some notes on stability of the solutions, as well as the existence of steady state. In particular, we will show that stability may be deduced from spectral properties of map $X$ in \eqref{eq:LambdaExplicit}. This correlates closely with the case of \emph{periodic} Hamiltonian \cite{Szczygielski_2021}, i.e. if the underlying torus is of dimension $r=1$, and some of results proved there are applicable for this case as well.

Let $\xi \in \complexes$ be any eigenvalue of map $X$ given in \eqref{eq:LambdaExplicit} for eigenvector $\varphi\in\matrd$, so $X(\varphi) = \xi\varphi$. Also, let us define quasiperiodic function $\phi(t) = \Sigma_t (\varphi)$. Then,
\begin{equation}
	\varphi (t) = \Lambda_t (\varphi) = e^{t\xi} \phi(t)
\end{equation}
is a solution of ODE \eqref{eq:MMEODE}, i.e.~satisfies $\dot{\varphi}(t) = L_t (\varphi(t))$. Since $\Sigma_t$ is an isometry, we see that a qualitative, long-time behavior of $\varphi(t)$ is fully determined by $\xi$, or, in other words, by spectral properties of $X$. One quickly checks that three possible scenarios exist:
\begin{enumerate}
	\item If $\Re{\xi} < 0$, then the solution vanishes exponentially as $t\to\infty$.
	\item If $\Re{\xi} = 0$, then $\varphi (t) = e^{i (\Im{\xi}) t}\phi (t)$, i.e.~evolves isometrically.
	\item If finally $\Re{\xi} > 0$, then $\varphi (t)$ grows infinitely in norm, when $t\to\infty$.
\end{enumerate}

If $\Re{\mu} \leqslant 0$, the solution $\varphi (t)$ is called \emph{asymptotically stable}, and \emph{unstable} otherwise. This general classification originates in ODEs with \emph{periodic} matrix coefficient \cite{Yakubovich1975}, where numbers $\xi \in \spec{X}$ are called the \emph{characteristic exponents}. In periodic case, asymptotic behavior is equally determined by eigenvalues (\emph{characteristic multipliers}) of so-called \emph{monodromy matrix}, i.e.~a fundamental solution calculated at a \emph{period}. This is notably different situation, as no period exists in \emph{quasiperiodic} setting. Fortunately, the semigroup $\{e^{tX} : t\in\reals_+ \}$ was shown to be completely positive and trace preserving and thus, the unstable behavior of solutions will be shown to be impossible.

\begin{proposition}\label{prop:SpectraletX}
Let $\mathbb{D}^1$ be a unit disc in $\complexes$. Then, $\spec{e^{tX}}\subset\mathbb{D}^1$ for all $t\in\reals_+$, contains 1 and is invariant with respect to complex conjugation. In consequence, $\spec{X}$ is fully contained in complex left half-plane $\{\Re{\xi} \leqslant 0\}$, contains $0$ and is invariant with respect to complex conjugation.
\end{proposition}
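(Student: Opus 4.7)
The plan is to establish all three claims about $\spec{e^{tX}}$ first and then deduce the corresponding claims about $\spec{X}$ via the spectral mapping identity $\spec{e^{tX}} = e^{t\spec{X}}$, which is available since $X$ is bounded on the finite-dimensional space $\matrd$.

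For $\spec{e^{tX}} \subset \mathbb{D}^1$, I would simply invoke the remark closing the previous subsection: $\{e^{tX} : t\in\reals_+\}$ was shown there to be a completely positive, trace-preserving $\mathcal{C}_0$-semigroup of trace norm contractions on $\matrd$. A contraction on a normed space has spectral radius at most $1$, so every spectral point lies in the closed unit disc. To place $1$ in $\spec{e^{tX}}$, I would pass to the adjoint with respect to the non-Hermitian pairing $(a,b) = \tr{ab}$ introduced in Section \ref{sec:Background}: trace preservation $\tr{e^{tX}(\rho)} = \tr{\rho}$ rewrites as $(e^{tX})^\adj (I_d) = I_d$, so $1$ is an eigenvalue of $(e^{tX})^\adj$. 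Since $T$ and $T^\adj$ share the characteristic polynomial in finite dimensions, this forces $1 \in \spec{e^{tX}}$ rather than merely its complex conjugate.

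For invariance under complex conjugation, I would exploit that every completely positive map preserves Hermiticity, so the real subspace of Hermitian matrices in $\matrd$ is invariant under $e^{tX}$. In the real basis obtained by combining Hermitian matrix units $e_{jj}$ and $\tfrac{1}{\sqrt{2}}(e_{jk}+e_{kj})$ with their anti-Hermitian partners $\tfrac{i}{\sqrt{2}}(e_{jk}-e_{kj})$, the representation of $e^{tX}$ is a real matrix, its characteristic polynomial has real coefficients, and therefore $\spec{e^{tX}}$ is closed under complex conjugation.

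The three properties of $\spec{X}$ then fall out quickly. If $\xi \in \spec{X}$, then $e^{t\xi} \in \spec{e^{tX}} \subset \mathbb{D}^1$ for every $t\geqslant 0$, which forces $\Re \xi \leqslant 0$. Differentiating $(e^{tX})^\adj (I_d) = I_d$ at $t=0$ gives $X^\adj(I_d) = 0$, so $0 \in \spec{X^\adj} = \spec{X}$. The same real-basis argument applied directly to the Hermiticity-preserving generator $X$ yields invariance of $\spec{X}$ under complex conjugation. The only step that genuinely requires any care is the insistence that it is $1$ itself, not merely its conjugate, that sits in $\spec{e^{tX}}$; this is why I formulate trace preservation using the non-Hermitian pairing adopted in the paper rather than the Hilbert--Schmidt one. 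Beyond that point, the argument is routine $\mathcal{C}_0$-semigroup and linear algebra bookkeeping, so I do not expect a serious obstacle.
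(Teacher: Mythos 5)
Your proposal is correct and follows essentially the same route as the paper: complete positivity plus trace preservation give the contraction bound and the eigenvalue $1$, Hermiticity preservation gives conjugation symmetry of the spectrum, and the spectral mapping identity transfers everything to $\spec{X}$. The only (harmless) micro-level differences are that the paper re-derives the contraction property via norm-attainment of unital CP maps at $I$ rather than citing it, obtains conjugation invariance from the eigenvector correspondence $a \mapsto a^{\hadj}$ instead of a real-basis argument, and places $0$ in $\spec{X}$ via spectral mapping from $1 \in \spec{e^{tX}}$ rather than by differentiating $(e^{tX})^{\adj}(I) = I$ at $t = 0$ -- your versions of these two steps are, if anything, slightly cleaner.
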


\begin{proof}
Trace preservation condition demands that $(e^{tX})^\adj$, the Banach space adjoint of $e^{tX}$, is \emph{unital}, $(e^{tX})^\adj (I) = I$ for all $t\in\reals$. Thus, $I$ is one of its eigenvectors and necessarily $1 \in \spec{e^{tX}}$. Since $(e^{tX})^\adj$ is also completely positive, it attains its norm at $I$ \cite{Paulsen2003}, and so $\| (e^{tX})^\adj (I)\| = \| e^{tX} \|=1$, which then agrees with its spectral radius. This shows that $\spec{e^{tX}}$ is contained in $\mathbb{D}^1$. Map $e^{tX}$, being completely positive, is also Hermiticity preserving, which means that for all $a\in\matrd$, we have $e^{tX} (a)^{\hadj} = e^{tX}(a^\hadj)$. By this property, we see that if $\lambda_t$ is an eigenvalue for eigenvector $a$, then taking the Hermitian adjoint of eigenequation $e^{tX} (a) = \lambda_t a$ yields $a^\hadj$ is also an eigenvector for eigenvalue $\overline{\lambda_t}$. This shows that $\spec{e^{tX}}$ is either real, or consists of self-conjugate pairs of complex numbers. Remaining statements on $\spec{X}$ are naturally implied by properties of semigroup $e^{tX}$. Since $1\in\spec{e^{tX}}$ for all $t\in\reals_+$, then the spectral mapping theorem yields existence of $\xi_0 \in \spec{X}$ such that $e^{t\xi_0} = 1$ for every $t\geqslant 0$, which is possible only if $\xi_0 = 0$. Since $|e^{t\xi}| \leqslant 1$, we have $e^{t(\Re{\xi})} < 1$ and $\Re{\xi} \leqslant 0$. Finally, from complex conjugation invariance of $\spec{e^{tX}}$ we have that for every $\xi\in\spec{X}$, both $e^{t\xi},\overline{e^{t\xi}} \in \spec{e^{tX}}$, so $\xi,\overline{\xi}\in\spec{X}$ and spectrum of $X$ must be also invariant with respect to complex conjugation.
\end{proof}

Now, let us assume the map $X$ is \emph{diagonalizable}, i.e.~that it admits a set $\{\varphi_j\}$ of $d^2$ linearly independent eigenvectors, all subject to eigenequations $X(\varphi_j) = \xi_j \varphi_j$, $\xi_j \in \spec{X}$,  being a basis spanning $\matrd$. Let also
\begin{equation}
	\varphi_j (t) = e^{t\xi_j}\phi_j (t), \quad \phi_j (t) = \Sigma_t (\varphi_j).
\end{equation}
Then, each $\varphi_j (t)$ is automatically a solution and we can formulate a following result, being an immediate consequence of Proposition \ref{prop:SpectraletX}:

\begin{proposition}
All solutions $\varphi_j (t)$ of Markovian Master Equation \eqref{eq:MMEODE} for diagonalizable map $X$ are asymptotically stable.
\end{proposition}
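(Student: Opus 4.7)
The statement is essentially a direct corollary of Proposition \ref{prop:SpectraletX} together with the three-case classification of solutions preceding it, so my plan is short. I would first invoke Proposition \ref{prop:SpectraletX} to obtain the key spectral inclusion $\spec{X} \subset \{\xi \in \complexes : \Re\xi \leqslant 0\}$. Since $\{\varphi_j\}$ is the assumed eigenbasis of $X$ with eigenvalues $\xi_j$, this immediately forces $\Re \xi_j \leqslant 0$ for every $j \in \{1, \ldots, d^2\}$.

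Next, I would analyze the long-time behavior of each $\varphi_j(t) = e^{t\xi_j}\phi_j(t)$. The factor $\phi_j(t) = \Sigma_t(\varphi_j)$ is bounded in norm uniformly in $t$, since $\Sigma_t$ defined in \eqref{eq:Sigma} is an isometry on $\matrd$; concretely $\| \phi_j (t) \| = \| \varphi_j \|$ for all $t\in\reals_+$. Therefore $\| \varphi_j(t)\| = e^{t \Re \xi_j} \| \varphi_j \|$, which shows that the modulus of the solution is governed entirely by $\Re \xi_j$.

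Combining these two steps with the classification stated just before Proposition \ref{prop:SpectraletX}, each $\varphi_j(t)$ falls into case (1) (exponential decay, when $\Re \xi_j < 0$) or case (2) (isometric evolution, when $\Re \xi_j = 0$). Case (3) is excluded by Proposition \ref{prop:SpectraletX}. By the definition of asymptotic stability given in that discussion (namely $\Re \xi \leqslant 0$), all basis solutions $\varphi_j(t)$ are asymptotically stable, which is the claim.

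There is no real obstacle here: the work was done in proving Proposition \ref{prop:SpectraletX} (via complete positivity and trace preservation of the semigroup $\{e^{tX}\}$) and in setting up the product decomposition $\Lambda_t = \Sigma_t e^{tX}$ in \eqref{eq:LambdaExplicit}. The only subtlety worth flagging explicitly is the use of the isometry property of $\Sigma_t$ to guarantee that the quasiperiodic factor $\phi_j(t)$ neither grows nor decays, so that the exponent $\xi_j$ alone controls stability.
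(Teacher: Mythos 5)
Your proposal is correct and follows exactly the paper's route: the paper states this proposition as an immediate consequence of Proposition \ref{prop:SpectraletX} (which places $\spec{X}$ in the closed left half-plane) combined with the three-case classification and the isometry of $\Sigma_t$, which is precisely your argument. No gaps; the explicit remark that $\|\varphi_j(t)\| = e^{t\Re\xi_j}\|\varphi_j\|$ because $\Sigma_t$ is a unitary conjugation is a welcome clarification of why the quasiperiodic factor cannot affect stability.
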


For our next result, which is analogous to the one developed in \cite{Szczygielski_2021}, we introduce a following notation. For $\xi \in \spec{X}$, let us denote by $k_\xi$ its geometric multiplicity and by $E_{X}(\xi)$ a corresponding invariant eigenspace (so $k_\xi = \dim{E_{X}(\xi)}$).

\begin{proposition}

The following claims hold:
\begin{enumerate}
	\item If $\xi \in \spec{X}\setminus \{0\}$, then every $\varphi \in E_X (\xi)$ is traceless and thus cannot be positive semi-definite.
	\item If $\xi = 0$ is simple, i.e.~$k_0 = 1$, then $\varphi\geqslant 0$. If $k_0 > 1$, then there exists some positive semi-definite $\varphi^{+} \in E_X (0)$.
	\item If $\xi_j \in \spec{X}\setminus (-\infty , 0]$, then $\varphi \in E_X (\xi_j)$ and $\varphi^{\hadj} \in E_X (\overline{\xi_j})$. If in addition $\xi$ is simple, then $\varphi$ is Hermitian.
\end{enumerate}
\end{proposition}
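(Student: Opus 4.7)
The common thread is to translate between the map $X$ and its semigroup $\{e^{tX}\}_{t\geqslant 0}$, which by Theorem \ref{thm:LambdaCP} is completely positive and trace preserving on $\matrd$. Differentiating these two properties at $t=0$ yields that $X$ is \emph{trace annihilating}, $\tr{X(a)} = 0$ for all $a\in\matrd$, and \emph{Hermiticity preserving}, $X(a)^{\hadj} = X(a^{\hadj})$. These two identities will carry essentially the entire argument.

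For claim (1), applying trace annihilation to an eigenequation $X(\varphi) = \xi\varphi$ gives $\xi\,\tr{\varphi} = 0$, so $\xi\neq 0$ forces $\tr{\varphi} = 0$; since a positive semi-definite matrix has nonnegative trace with equality only for the zero matrix, no nonzero $\varphi\in E_X(\xi)$ can be positive semi-definite. For claim (3), Hermiticity preservation immediately gives $X(\varphi^{\hadj}) = \overline{\xi_j}\,\varphi^{\hadj}$, placing $\varphi^{\hadj}$ in $E_X(\overline{\xi_j})$. The Hermiticity-of-eigenvector conclusion then naturally applies in the case $\xi_j\in\reals$ (where $E_X(\xi_j) = E_X(\overline{\xi_j})$): simplicity forces $\varphi^{\hadj} = c\varphi$ for a scalar $c$, and taking the adjoint once more yields $|c|=1$, whereupon an appropriate global phase rescaling $\varphi\mapsto e^{i\arg(c)/2}\varphi$ produces a Hermitian representative.

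The heart of the proof lies in claim (2), where one must exhibit a positive semi-definite element of $E_X(0)$. Here I would use a fixed point argument on the convex set of density matrices $\mathcal{S} = \{\rho\in\matrd : \rho\geqslant 0,\, \tr{\rho} = 1\}$, which is compact in $\matrd$ and invariant under each $e^{tX}$ by complete positivity and trace preservation. Since $\{e^{tX}\}_{t\geqslant 0}$ is a commuting family of continuous affine self-maps of $\mathcal{S}$, the Markov-Kakutani fixed point theorem produces a common fixed point $\varphi^{+}\in\mathcal{S}$; differentiating $e^{tX}(\varphi^{+}) = \varphi^{+}$ at $t=0$ yields $X(\varphi^{+}) = 0$, so $\varphi^{+}\in E_X(0)$. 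When $k_0 = 1$, every element of $E_X(0)$ is a scalar multiple of $\varphi^{+}$ and is therefore proportional to a positive semi-definite matrix. The main obstacles I anticipate are the justification of the fixed point step -- equivalently, one could run a Krylov-Bogolyubov Ces\`aro average of $e^{tX}$ applied to a single state in $\mathcal{S}$ and exploit compactness to extract a limit annihilated by $X$ -- together with reconciling the domain in claim (3) as stated with its conclusion on Hermiticity, which most naturally concerns the case $\xi_j \in \reals$ rather than the complement $\spec{X}\setminus (-\infty,0]$.
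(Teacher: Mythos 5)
Your proposal is correct and complete. Note first that the paper itself gives no proof of this proposition: it is deferred verbatim to Proposition 5 of the cited reference \cite{Szczygielski_2021}, so there is nothing in-text to compare against; but the argument used there is the same standard one you give, namely differentiating trace preservation and Hermiticity preservation of $e^{tX}$ at $t=0$ to get $\tr{X(a)}=0$ and $X(a^{\hadj})=X(a)^{\hadj}$, which immediately yield claims (1) and the first half of (3), and producing an invariant state for claim (2). For the invariant state your two suggested routes (Markov--Kakutani on the compact convex set of density matrices, or a Krylov--Bogolyubov Ces\`aro average $\frac{1}{T}\int_0^T e^{tX}(\rho_0)\,dt$ followed by extraction of a convergent subsequence and the observation $X(\varphi^+)=\lim_T \frac{1}{T}(e^{TX}(\rho_0)-\rho_0)=0$) are both valid in finite dimension; the Ces\`aro route is the more elementary and is essentially what the reference uses. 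Your phase-rescaling argument for Hermiticity of a simple eigenvector ($\varphi^{\hadj}=c\varphi$, $|c|=1$, replace $\varphi$ by $e^{i\arg(c)/2}\varphi$) is also correct. Finally, you are right to flag that claim (3) as printed is internally inconsistent: since $\spec{X}\subset\{\Re\xi\leqslant 0\}$ by Proposition \ref{prop:SpectraletX}, the set $\spec{X}\setminus(-\infty,0]$ consists of non-real eigenvalues, for which $E_X(\xi_j)\neq E_X(\overline{\xi_j})$ and a Hermitian eigenvector is impossible (it would force $\xi_j=\overline{\xi_j}$); the Hermiticity conclusion can only hold for real simple $\xi_j$, and the hypothesis should be read accordingly. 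The only cosmetic caveat is that in claim (2) the assertion ``$\varphi\geqslant 0$'' must be understood up to scalar normalization, exactly as you interpret it.
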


Proof of this claim is virtually identical to the proof of Proposition 5 in \cite{Szczygielski_2021} and so we will not present it here. As a consequence, all above observations allow to formulate a following result on existence of asymptotic steady state, which is a direct analogue of periodic case (see \cite[Theorem 1]{Szczygielski_2021}).

\begin{proposition}\label{eq:etXspectral}
Let $\Lambda_t = \Sigma_t e^{tX}$ be characterized by \eqref{eq:Sigma} and \eqref{eq:LambdaExplicit} and let $X$ be diagonalizable. Then, $\Lambda_t$ admits a limit cycle $t\mapsto\rho_{t}^{\infty}$, such that each solution $\rho_t$ becomes arbitrarily close to $\rho_{t}^{\infty}$ in uniform topology in space $\mathscr{C}_0 ([t_0, \infty ), \matrd)$ of continuous, matrix-valued functions defined on $[t_0 ,\infty )$ for $t_0 > 0$ large enough. Moreover, $\rho_{t}^{\infty}$ is quasiperiodic if and only if no non-zero, purely imaginary eigenvalues of $X$ exist.
\end{proposition}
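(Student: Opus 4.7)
The plan is to exploit the product form $\Lambda_t = \Sigma_t e^{tX}$ from \eqref{eq:LambdaExplicit}, which cleanly separates a bounded, quasiperiodic, isometric factor $\Sigma_t$ from a semigroup $e^{tX}$ whose asymptotics is dictated by $\spec{X}$. First I would diagonalize $X$: fix a basis $\{\varphi_j\}_{j=1}^{d^2}$ of $\matrd$ with $X(\varphi_j) = \xi_j \varphi_j$, and introduce the bounded, quasiperiodic functions $\phi_j(t) := \Sigma_t(\varphi_j)$, which satisfy $\|\phi_j(t)\| = \|\varphi_j\|$ by isometry of $\Sigma_t$. Expanding an arbitrary initial state $\rho_0 = \sum_j c_j \varphi_j$ gives the explicit formula
\begin{equation}
\rho_t = \Lambda_t(\rho_0) = \sum_j c_j \, e^{t\xi_j} \, \phi_j(t).
\end{equation}

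Next, Proposition \ref{prop:SpectraletX} forces $\Re \xi_j \leqslant 0$, so I would partition $\{j\} = J_- \sqcup J_0$ with $J_- = \{j : \Re \xi_j < 0\}$ and $J_0 = \{j : \Re \xi_j = 0\}$, and define the candidate limit cycle by retaining only the non-decaying modes,
\begin{equation}
\rho_t^\infty := \sum_{j \in J_0} c_j \, e^{t\xi_j} \, \phi_j(t).
\end{equation}
Setting $\gamma := \min_{j \in J_-} |\Re \xi_j| > 0$ (with the convention $\gamma = +\infty$ if $J_-$ is empty), the crude estimate
\begin{equation}
\| \rho_t - \rho_t^\infty \| \leqslant \Bigl(\sum_{j \in J_-} |c_j| \, \|\varphi_j\|\Bigr) e^{-\gamma t}
\end{equation}
is uniform in $t$, so given $\epsilon > 0$ one picks $t_0$ with $e^{-\gamma t_0}$ small enough to secure $\sup_{t \geqslant t_0}\|\rho_t - \rho_t^\infty\| < \epsilon$, establishing convergence in the uniform topology of $\mathscr{C}_0([t_0, \infty), \matrd)$.

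The equivalence concerning quasiperiodicity follows from a further split $J_0 = \{j : \xi_j = 0\} \sqcup \{j : \xi_j = i\omega_j,\ \omega_j \neq 0\}$. If the latter subset is empty, then $\rho_t^\infty = \sum_{\xi_j = 0} c_j \phi_j(t)$ is a finite sum of quasiperiodic functions sharing the winding $\vec{\theta}$, hence quasiperiodic with frequency vector $\vec{\Omega}$. Conversely, assume some $\xi_{j_0} = i\omega \neq 0$ exists, and pick $\rho_0$ with $c_{j_0} \neq 0$. Using Lemma \ref{lemma:PtConvergence} to expand $\phi_{j_0}(t) = \sum_{\vec{n}} \hat{\phi}_{j_0, \vec{n}} \, e^{i \vec{n} \cdot \vec{\Omega} t}$, the summand $c_{j_0} e^{it\omega}\phi_{j_0}(t)$ exhibits Fourier harmonics at the shifted frequencies $\omega + \vec{n} \cdot \vec{\Omega}$, which lie outside the lattice $\{\vec{m} \cdot \vec{\Omega} : \vec{m} \in \integers^r\}$, thereby obstructing the representation of $\rho_t^\infty$ as $\hat{\rho}^\infty \circ \vec{\theta}$.

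The principal obstacle is executing this last step cleanly, since one must exclude the accidental resonance $\omega \in \{\vec{k} \cdot \vec{\Omega} : \vec{k} \in \integers^r\}$. I expect this to follow by combining assumption \ref{assm:OmegaCF} with the covariance identity $\comm{K}{\comm{\hav}{\cdot\,}} = 0$ from Proposition \ref{prop:CovProp}, which forces $X$ to respect the spectral decomposition of $\comm{\hav}{\cdot\,}$ and ties its purely imaginary eigenvalues to the Bohr quasi-frequencies already controlled by the congruence-freedom hypothesis. The first two paragraphs amount to routine exponential-decay and linear-algebra manipulations; the genuine content sits in the Fourier-theoretic analysis of the third.
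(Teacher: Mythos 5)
Your proposal follows the paper's proof essentially step for step: same diagonalization of $X$, same splitting of $\spec{X}$ into decaying and non-decaying modes (the paper writes $\spec{X}=\{0\}\cup M_1\cup M_2$ with $M_1$ the non-zero purely imaginary part and $M_2$ the strictly contracting part), same definition of $\rho_t^\infty$ by discarding the $M_2$-modes, and the same uniform bound $\|\rho_t-\rho_t^\infty\|\leqslant A e^{-\gamma t}$ using isometry of $\Sigma_t$ (you correctly take $\gamma=\min_{j\in J_-}|\Re\xi_j|$, where the paper's $a=\max|\Re\xi_j|$ is evidently a slip). The one place you go beyond the paper is the ``only if'' direction of the quasiperiodicity claim: the paper simply asserts that the presence of a factor $e^{i(\Im\xi_j)t}$ with $\Im\xi_j\neq 0$ destroys quasiperiodicity, whereas you correctly observe that this requires excluding the resonance $\Im\xi_j=\vec{k}\cdot\vec{\Omega}$, $\vec{k}\neq\vec{0}$, in which case the factor would be absorbed into the winding and $\rho_t^\infty$ would remain quasiperiodic with respect to the same $\vec{\Omega}$. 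That worry is legitimate and is not addressed in the paper; your proposed fix via Proposition \ref{prop:CovProp} is plausible in spirit (covariance forces $X$ to preserve the eigenspaces of $\comm{\hav}{\cdot\,}$, so its eigenvalues are of the form $-i\omega+\mu$ with $\mu\in\spec{K|_{E_\omega}}$), but it is not automatic: assumption \ref{assm:OmegaCF} only constrains differences of Bohr quasi-frequencies, and the imaginary part contributed by the dissipative block $K|_{E_\omega}$ is not a priori controlled by that hypothesis, so this step would need a genuine argument (or an additional hypothesis) to close. Since the paper's own proof leaves the same gap, your write-up is at least as complete as the original.
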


\begin{proof}
Proposition \ref{prop:SpectraletX} allows us to decompose $\spec{X}$ into three mutually disjoint subsets, $\spec{X} = \{0\} \cup M_1 \cup M_2$, where
\begin{equation}\label{eq:specXdecom}
	M_1 = \{\xi \in i \reals \setminus \{0\}\}, \quad M_2 = \{ \xi : \Re{\xi} < 0 \}.
\end{equation}
Let again $k_0 = \dim{E_X (0)}$ be the multiplicity of eigenvalue 0. By Proposition \ref{prop:SpectraletX}, eigenspace $E_{X}(0)$ is spanned by set of eigenvectors $\{\varphi_{0}^{(k)}\} \in \matrd$, $k \in \{1, \, ... \, , \, k_0\}$. Vectors $\varphi_{0}^{(k)}$ are semigroup invariants, $e^{tX}(\varphi_{0}^{(k)}) = \varphi_{0}^{(k)}$. Every solution $\rho_t$ can be, via decomposition \eqref{eq:specXdecom}, written as
\begin{align}
	\rho_t &= \sum_{k=1}^{k_0} c_{0}^{(k)} \phi_{0}^{(k)}(t) + \sum_{\xi_j \in M_1} c_{j} e^{i(\Im{\xi_j})t} \phi_{j}(t) \\
	&+ \sum_{\xi_j \in M_2} c_{j} e^{(\Re{\xi_j})t} e^{i(\Im{\xi_j})t} \phi_{j}(t), \nonumber
\end{align}
where $\phi_{0}^{(k)}(t) = \Sigma_t (\varphi_{0}^{(k)}(t))$, $\phi_{j}(t) = \Sigma_t (\varphi_{j}(t))$ and coefficients $c_{j}^{(k)}, c_{j} \in \complexes$ are determined by initial conditions. Condition $\Re{\xi_j} < 0$ for $\xi_j \in M_2$ then allows to deduce $\rho_{t}^{\infty}$ by deprecating the last sum,
\begin{equation}\label{eq:RhoInfinity}
	\rho_{t}^{\infty} = \sum_{k=1}^{k_0} c_{0}^{(k)} \phi_{0}^{(k)}(t) + \sum_{\xi_j \in M_1} c_{j} e^{i(\Im{\xi_j})t} \phi_{j}(t) .
\end{equation}
Indeed, we see that
\begin{align}
	\sup_{t\in[t_0,\infty )}\| \rho_t - \rho_{t}^{\infty} \|_1 &= \sup_{t\in [t_0, \infty )}\left\| \sum_{\xi_j \in M_2} c_{j} e^{(\Re{\xi_j})t} e^{i(\Im{\xi_j})t} \phi_{j}(t) \right\|_1 \\
	&\leqslant \sup_{t\in [t_0, \infty )} e^{-a t} \sum_{\xi_j \in M_2} |c_j| \| \varphi_j\|_1 \nonumber\\
	&= A e^{-at_0}, \nonumber
\end{align}
where $A = \sum_{\xi_j \in M_2} |c_j| \| \varphi_j\|$ and $a = \max|\Re{\xi_j}|$. Then, for any chosen $\epsilon > 0$, we have $\sup_{t\in[t_0,\infty )}\| \rho_t - \rho_{t}^{\infty} \|_1 < \epsilon$ if $t_0 > \frac{1}{a} \ln{\frac{A}{\epsilon}}$, i.e.~functions $\rho_t$, $\rho_{t}^{\infty}$ become arbitrarily close to each other in supremum norm. Quasiperiodicity condition will then be satisfied if and only if no $e^{i(\Im{\xi_j})t}$ factor appears under the second sum in \eqref{eq:RhoInfinity}, which is the case only if $\xi_j \in M_2$ are all real, or if $M_2 = \emptyset$. This is equivalent to the lack of imaginary $\xi_j$ other than 0. 
\end{proof}

\subsection{Note on the periodic case}

To conclude, we shortly remark on the most simple, however also most tractable case of the underlying torus being of dimension $r=1$, i.e.~a circle. In such scenario, Hamiltonian $H_t$ becomes simply \emph{periodic}. Naturally, periodicity remains a special case of quasiperiodicity, and so the majority of our analysis still applies. Moreover, in periodic setting the reducibility of initial Schroedinger equation does not need to be directly assumed, since it is actually \emph{granted} by virtue of \emph{Floquet theorem} and thus, assumptions \ref{assm:RatInd} and \ref{assm:Reducible} become redundant. The remaining assumption \ref{assm:OmegaCF} on $\Omega$-congruence freedom of set of Bohr quasi-frequencies however can still be invoked in order to end up with a mathematically pleasant, Lindblad form of semigroup generator $X$ in interaction picture (involving summation over only one copy of $\integers$). General results of this article remain valid in periodic case, i.e.~a product form of resulting dynamical map $\Lambda_t$ still applies. Similarly, the dynamical map also admits a steady state, periodic in this case. Periodic setting was already covered in details in literature, however results accessible therein were obtained with slightly different techniques; please see references \cite{Alicki2006b,Szczygielski2013,Szczygielski2014} for details.

\section{Conclusions}
\label{sec:Conclusions}

We demonstrated a derivation of reduced Markovian dynamics of open quantum system of finite dimension, governed by quasiperiodic Hamiltonian producing a Lyapunov-Perron reducible Schroedinger equation. In particular, we shown that the formal procedure of weak coupling limit is well-defined and applicable to our case, by employing a rigorous approach based on projection operator technique and integral Nakajima-Zwanzig equation. The derived Markovian Master Equation was shown to be also Lyapunov-Perron reducible in consequence. Its solution is a completely positive, trace preserving \emph{quantum dynamical map} of product form, clearly providing a straight generalization of our earlier results on periodically modulated systems. Both the mathematical structure of a solution, and its stability properties remain in close correlation with periodic case; the same can be then stated about validity of \emph{covariance property} and existence of time-dependent \emph{steady state} (\emph{quasiperiodic} in this case, as opposed to simply periodic one). We emphasized earlier, that one should be able to obtain comparable results (probably of much more involved formulation) if the assumption of $\vec{\Omega}$-congruence freedom of Bohr quasi-frequencies is lifted; the same can be then conjectured on abandoning the reducibility of Schroedinger equation. In such case however, no reducibility property of Master Equation can be hypothesized \emph{a priori}. This is a much tougher situation from mathematical point of view since one probably could not say much on internal structure of Lindbladian, apart from simply being of standard form (which would be granted by complete positivity and trace preservation of generated dynamics).

\section*{Acknowledgments}

This work was supported by the National Science Centre, Poland, via grant No. 2016/23/D/ST1/02043.

\appendix

\section{Mathematical supplement}
\label{app:Supplement}

\begin{proof}[Proof of Lemma \ref{lemma:PtConvergence}]
Notice, that if $H_{\mathrm{f.}}(t)$ depends smoothly on $t$, then via \eqref{eq:ptDerivative} there exist all derivatives of $p_t$ (since the $n$-th derivative demands for differentiating the Hamiltonian $n-1$ times) and so $t\mapsto p_t=[p_{jk}(t)]_{j,k=1}^{d}$ is also smooth. Since every function $p_{jk}$ may be put as a composition $p_{jk} = \hat{p}_{jk}\circ\vec{\theta}$ for $\hat{p}_{jk}$ defined on $\mathbb{T}^r$, we have $\hat{p}_{jk}\in\mathscr{C}^{\infty}(\mathbb{T}^r)$ and functions $\hat{p}_{jk}$ admit Fourier series
\begin{equation}\label{eq:pjkFourier}
	\hat{p}_{jk}(\theta) = \sum_{\vec{n}\in\integers^r} \hat{c}_{jk}(\vec{n}) e^{i\vec{n}\cdot\vec{\theta}}
\end{equation}
converging uniformly, and so pointwise, on $\mathbb{T}^r$ \cite{Grafakos2009}.

For any matrix $m\in\matrd$, let $\| m \|_{\mathrm{F}}$ be its \emph{Frobenius norm}, $\| m \|_{\mathrm{F}} = \sqrt{\tr{m^\hadj m}} = \left( \sum_{j,k=1}^{d}|a_{jk}|^{2}\right)^{1/2}$. One shows \cite{RogerA.Horn2012} that the inequality $\| m \| \leqslant \| m \|_{\mathrm{F}}$ holds for all $m\in\matrd$. This inequality allows as to estimate, for $N>0$,
\begin{equation}
	\sup_{\vec{\theta}\in\mathbb{T}^r}\Big\| \sum_{\|\vec{n}\|_2 <N}\hat{p}_{\vec{n}}e^{i\vec{n}\cdot\vec{\theta}} - \hat{p}(\vec{\theta}) \Big\| \leqslant \Bigg( \sum_{j,k=1}^{d}\sup_{\vec{\theta}\in\mathbb{T}^r}{\Big| \sum_{\|\vec{n}\|_{2}<N}\hat{c}_{jk}(\vec{n})e^{i\vec{n}\cdot\vec{\theta}}-\hat{p}_{jk}(\vec{\theta}) \Big|^{2}} \Bigg)^{\frac{1}{2}},
\end{equation}
where we introduced a notion of a \emph{circular partial sum} of multidimensional Fourier series \cite{Grafakos2009}. The right hand side of the inequality converges to 0 as $N\to\infty$ by uniform convergence of \eqref{eq:pjkFourier}, and so the Fourier series $\sum_{\vec{n}}\hat{p}_{\vec{n}}e^{i\vec{n}\cdot\vec{\theta}}$ converges uniformly to $\hat{p}$ with respect to matrix norm in $\matrd$. By equivalence of norms in $\matrd$, the result remains true for any matrix norm used.

Now, take $a\in\mathcal{B}$ and consider function $\vec{\theta}\mapsto \hat{P}_{\vec{\theta}}\in \mathscr{B}(\mathcal{B})$ such that $\hat{P}_{\vec{\theta}}(a) = (\hat{p}(\vec{\theta})\otimes I)\, m \, (\hat{p}(\vec{\theta})^{\hadj}\otimes I)$; this one will also admit uniformly convergent Fourier series. To see this, we set $N,M>0$ and estimate
\begin{equation}\label{eq:pProductEstimation}
	\sup_{\|a\|\leqslant 1}\Bigg\|\sum_{\|\vec{n}\|_{2}<N}\sum_{\|\vec{m}\|_{2}<M}(\hat{p}_{\vec{n}}\otimes I) \, a \, (\hat{p}_{\vec{m}}^{\hadj} \otimes I) e^{i(\vec{n}-\vec{m})\cdot\vec{\theta}} - \hat{P}_{\vec{\theta}}(a)\Bigg\|
\end{equation}
by adding and subtracting term $(\hat{p}(\vec{\theta})\otimes I) \, a \sum_{\|\vec{m}\|_{2}<M} (\hat{p}_{\vec{m}}^{\hadj}\otimes I) e^{-i\vec{m}\cdot\vec{\theta}}$ under the norm; after reshuffling and carrying out the supremum, \eqref{eq:pProductEstimation} is dominated by expression
\begin{align}
	&\leqslant \Big\|\sum_{\|\vec{n}\|_{2} <N} \hat{p}_{\vec{n}}e^{i\vec{n}\cdot\vec{\theta}} - \hat{p}(\vec{\theta})\Big\| \Big\| \sum_{\|\vec{m}\|_{2}<M}\hat{p}_{\vec{m}}e^{i\vec{m}\cdot\vec{\theta}}\Big\| + \Big\| \sum_{\|\vec{m}\|_{2}<M} \hat{p}_{\vec{m}}e^{i\vec{m}\cdot\vec{\theta}} - \hat{p}(\vec{\theta}) \Big\|
\end{align}
After employing uniform convergence of $\sum_{\vec{n}}\hat{p}_{\vec{n}}e^{i\vec{n}\cdot\vec{\theta}}$ and Moore-Smith theorem it is clear that Fourier series
\begin{equation}
	\sum_{\vec{n}\in\integers^r} \hat{C}_{\vec{n}} e^{i\vec{n}\cdot\vec{\theta}}, \quad \hat{C}_{\vec{n}} = \sum_{\vec{m}\in\integers^r} \hat{p}_{\vec{n}}\hat{p}_{\vec{n}-\vec{m}}^{\hadj} \otimes I
\end{equation}
of $\hat{P}_\vec{\theta}$ converges uniformly for all $\theta\in\mathbb{T}^r$ with respect to supremum norm in $\mathscr{B}(\mathcal{B})$ and the same is true for function $t\mapsto P_t$. This concludes the proof.
\end{proof}

\begin{proof}[Proof of Lemma \ref{lemma:PexptZdecomposition}]
Let the spectral decomposition of $\bar{H}$ be $\bar{H} = \sum_{k=1}^{d} \epsilon_k P_k$ (including multiplicities), where $\{P_k\}$ is a set of orthogonal projections. Define a set $\{q_\omega\}$ of maps on $\matrd$ given via equality, for any $\rho\in\matrd$,
\begin{equation}
	q_\omega (\rho) = \sum_{(k,l)\sim \omega} P_k \rho P_l,
\end{equation}
where notation $\sum_{(k,l)\sim \omega}$ indicates that the summation was performed only over such pairs $(k,l)$ of indices such that $\epsilon_k - \epsilon_l = \omega$. Applying completeness of $\{P_k\}$, one immediately writes
\begin{equation}
	\comm{\bar{H}}{\rho} = \sum_{k,l=1}^{d} (\epsilon_k - \epsilon_l) P_k \rho P_l = \sum_{\omega} \omega \, q_\omega (\rho).
\end{equation}
Then, one can check that \eqref{eq:ExptZ} along with spectral mapping theorem and relation $\comm{\hame}{\envdens} = 0$ yield, for any $a\in\mathcal{B}$,
\begin{align}
	\proj{0}e^{t\bar{Z}}\proj{0}(a) &= \sum_{\omega} e^{-i\omega t} \proj{0} \left( q_\omega (\rho) \otimes e^{-it\comm{\hame}{\cdot}}(\envdens) \right) \\
	&= \sum_\omega \sum_{\omega} e^{-i\omega t} \proj{0}(q_\omega \otimes \id{})\proj{0}(a),\nonumber
\end{align}
where $\rho = \proj{0}(a)$. Clearly, this is the claimed decomposition \eqref{eq:PexptZdecomposition} for spectral projections $Q_\omega = \proj{0}(q_\omega \otimes \id{})\proj{0}$.
\end{proof}

\end{document}